\definecolor{darkred}{RGB}{150,0,0}
\definecolor{darkgreen}{RGB}{0,150,0}
\definecolor{darkblue}{RGB}{0,0,150}
\def\d{{\rm d}}
\newtheorem{thm}{Theorem}
\theoremstyle{remark}
\newtheorem{rem}{Remark}
\def\beq{\begin{equation}} 
\def\eeq{\end{equation}}
\def\beqn{\begin{eqnarray*}}
\def\eeqn{\end{eqnarray*}}
\def\Bitem{\begin{itemize}\setlength{\itemsep}{.2in}}
\def\bitem{\begin{itemize}\setlength{\itemsep}{.05in}}
\def\eitem{\end{itemize}}
\def\Benum{\begin{enumerate}\setlength{\itemsep}{.2in}}
\def\benum{\begin{enumerate}\setlength{\itemsep}{.05in}}
\def\eenum{\end{enumerate}}
\def\bmult{\begin{multline*}}
\def\emult{\end{multline*}}
\def\bcenter{\begin{center}}
\def\ecenter{\end{center}}
\def\bframe{\begin{frame}}
\def\eframe{\end{frame}}
\newcommand{\thmref}[1]{Theorem~\ref{thm:#1}}
\newcommand{\secref}[1]{Section~\ref{sec:#1}}
\newcommand{\figref}[1]{Figure~\ref{fig:#1}}
\newcommand{\tabref}[1]{Table~\ref{tab:#1}}
\DeclareMathOperator*{\argmax}{arg\, max}
\DeclareMathOperator*{\esssup}{ess\, sup}
\DeclareMathOperator*{\essinf}{ess\, inf}
\def\cC{\mathcal{C}}
\def\cD{\mathcal{D}}
\def\cE{\mathcal{E}}
\def\cG{\mathcal{G}}
\def\cM{\mathcal{M}}
\def\cS{\mathcal{S}}
\def\bv{\mathbf{v}}
\def\bbR{\mathbb{R}}
\newcommand{\E}{\operatorname{\mathbb{E}}}
\renewcommand{\P}{\operatorname{\mathbb{P}}}
\def\eps{\varepsilon}
\def\comp{\mathsf{c}}
\def\1{\mathbbm{1}}
\newcommand{\IND}[1]{\mathds{1}\{ #1 \}}
\begin{document}

\thispagestyle{empty}
\title{Extending the Patra--Sen Approach to Estimating the Background Component in a Two-Component Mixture Model}
\author{
Ery Arias-Castro\footnote{Department of Mathematics, University of California, San Diego, USA \newline \indent \quad \url{https://math.ucsd.edu/\~eariasca}} 
\and He Jiang\footnote{Department of Mathematics, University of California, San Diego, USA \newline \indent \quad \url{https://math.ucsd.edu/people/graduate-students/}}
}
\date{}
\maketitle

\begin{abstract}
\cite{patra} consider a two-component mixture model, where one component plays the role of background while the other plays the role of signal, and propose to estimate the background component by simply `maximizing' its weight. While in their work the background component is a completely known distribution, we extend their approach here to three emblematic settings: when the background distribution is symmetric; when it is monotonic; and when it is log-concave. In each setting, we derive estimators for the background component, establish consistency, and provide a confidence band. While the estimation of a background component is straightforward when it is taken to be symmetric or monotonic, when it is log-concave its estimation requires the computation of a largest concave minorant, which we implement using sequential quadratic programming.
Compared to existing methods, our method has the advantage of requiring much less prior knowledge on the background component, and is thus less prone to model misspecification. We illustrate this methodology on a number of synthetic and real datasets.
\end{abstract}

\section{Introduction}  \label{sec:intro}

\subsection{Two component mixture models}

Among mixture models, two-component models play a special role. In robust statistics, they are used to model contamination, with the main component representing the inlier distribution, while the remaining component representing the outlier distribution \citep{hettmansperger2010robust, huber2009robust, tukey1960survey, huber1964robust}. In that kind of setting, the contamination is a nuisance and the goal is to study how it impacts certain methods for estimation or testing, and also to design alternative methods that behave comparatively better in the presence of contamination.

In multiple testing, the background distribution plays the role of the distribution assumed (in a simplified framework) to be common to all test statistics under their respective null hypotheses, while the remaining component plays the role of the distribution assumed of the test statistics under their respective alternative hypotheses \citep{efron2001empirical, genovese2002operating}.
In an ideal situation where the $p$-values can be computed exactly and are uniformly distributed on $[0,1]$ under their respective null hypotheses, the background distribution is the uniform distribution on $[0,1]$. Compared to the contamination perspective, here the situation is in a sense reverse, as we are keenly interested in the component other than the background component. 
We adopt this multiple testing perspective in the present work. 

\subsection{The Patra--Sen approach}

Working within the multiple testing framework, \cite{patra} posed the problem of estimating the background component as follows. They operated under the assumption that the background distribution is completely known --- a natural choice in many pracical situations, see for example the first two situations in \secref{symmetric_read_data}.
Given a density $f$ representing the density of all the test statistics combined, and letting $g_0$ denote a completely known density, define
\begin{equation}
\label{theta_0_definition}
\theta_0 := \sup \{t: f \ge t g_0\}.
\end{equation}
Note that $\theta_0 \in [0,1]$.
Under some mild assumptions on $f$, the supremum is attained, so that $f$ can be expressed as the following two-component mixture:
\begin{equation}
f = \theta_0 g_0 + (1-\theta_0) u,
\end{equation}
for some density $u$.
\cite{patra} aim at estimating $\theta_0$ defined in \eqref{theta_0_definition} based on a sample from the density $f$, and implement a slightly modified plug-in approach.
Even in this relatively simple setting where the background density --- the completely known density $g_0$ above --- is given, information on $\theta_0$ can help improve inference in a multiple testing situation as shown early on by \cite{storey2002direct}, and even earlier by \cite{benjamini2000adaptive}.

\subsection{Our contribution}
We find the Patra--Sen approach elegant, and in the present work extend it to settings where the background distribution (also referred to as the null distribution) --- not just the background proportion --- is unknown. For an approach that has the potential to be broadly applicable, we consider three emblematic settings where the background distribution is in turn assumed to be symmetric (\secref{symmetric}), monotone (\secref{monotone}), or log-concave (\secref{log-concave}). 
Each time, we describe the estimator for the background component (proportion and density) that the Patra-Sen approach leads to, and study its consistency and numerical implementation. We also provide a confidence interval for the background proportion and a simultaneous confidence band for the background density. In addition, in the log-concave setting, we provide a way of computing the largest concave minorant. We address the situation where the background is specified incorrectly, and mention other extensions, including combinations of these settings and in multivariate settings, in \secref{discussion}.


\subsection{More related work in multiple testing}
The work of \cite{patra} adds to a larger effort to estimate the proportion and/or the density of the null component in a multiple testing scenario. This effort dates back, at least, to early work on false discovery control \citep{benjamini1995controlling} where (over-)estimating the proportion of null hypotheses is crucial to controlling the FDR and related quantities \citep{storey2002direct, benjamini2000adaptive, genovese2004stochastic}.

Directly focusing on the estimation of the null proportion, \cite{langaas2005estimating} consider a setting where the $p$-values are uniform in $[0,1]$ under their null hypotheses and have a monotone decreasing density under their alternative hypotheses, while \cite{meinshausen2006estimating} do not assume anything of the alternative distribution and propose an estimator which is similar in spirit to that of \cite{patra}.
\cite{jin2007estimating} and \cite{jin2008proportion} consider a Gaussian mixture model and approach the problem via the characteristic function --- a common approach in deconvolution problems. 
Gaussian mixtures are also considered in \citep{efron2007size, efron2012large, cai2010optimal}, where the Gaussian component corresponding to the null has unknown parameters that need to be estimated. 

Some references to estimating the null component that have been or could be applied in the context of multiple testing are given in \cite[Ch 5]{efron2012large}.
Otherwise, we are also aware of the very recent work of \cite{roquain2020}, where in addition to studying the `cost' of having to estimate the parameters of the null distribution when assumed Gaussian, also consider the situation where null distribution belongs to a given location family, and further, propose to estimate the null distribution under an upper bound constraint on the proportion of non-nulls in the mixture model.

\begin{rem}
Much more broadly, all this connects with the vast literature on Gaussian mixture models \citep{cohen1967estimation, lindsay1993multivariate} and on mixture models in general \citep{mclachlan2004finite, mclachlan1988mixture, mclachlan2019finite, lindsay1995mixture}, including two-component models \citep{shen2018mm, bordes2006semiparametric, ma2015flexible, gadat2020parameter}. 
\end{rem}

\section{Symmetric background component}
\label{sec:symmetric}

We start with what is perhaps the most natural nonparametric class of null distributions: the class of symmetric distributions about the origin. Unlike \cite{roquain2020}, who assume that the null distribution is symmetric around an unknown location that needs to be estimated but is otherwise known, i.e., its `shape' is known, we assume that the shape is unknown. We do assume that the center of symmetry is known, but this is for simplicity, as an extension to an unknown center of symmetry is straightforward (see our numerical experiments in \secref{symmetric_experiments}). 
Mixtures of symmetric distributions are considered in \citep{hunter2007inference}, but otherwise, we are not aware of works estimating the null distribution under an assumption of symmetry in the context of multiple testing. 
For works in multiple testing that assume that the null distribution is symmetric but unknown, but where the goal is either testing the global null hypothesis or controlling the false discovery rate, see \citep{arias2017distribution, arias2017distribution_fdr}.

Following the footsteps of \cite{patra}, we make sense of the problem by defining for a density $f$ the following:
\begin{equation}
\label{symmetric_pi0_definition}
\pi_0 := \sup \big\{\pi: \exists g \in \cS \text{ s.t. } f - \pi g \geq 0 \text{ a.e.}\big\},
\end{equation}
where $\cS$ is the class of even densities (i.e., representing a distribution that is symmetric about the origin).
Note that $\pi_0 \in [0,1]$ is well-defined for any density $f$, with $\pi_0 = 1$ if and only if $f$ itself is symmetric.

\begin{thm}
We have
\begin{align}
\label{symmetric_pi0_value} 
\pi_0 = \int_{-\infty}^\infty h_0(x) d x,
&& h_0(x) := \min \{ f(x), f(-x) \}.
\end{align}
Moreover, if $\pi_0 > 0$ the supremum in \eqref{symmetric_pi0_definition} is attained by the following density and no other\,\footnote{~As usual, densities are understood up to sets of zero Lebesgue measure.} :
\begin{equation}
\label{symmetric_g0} 
g_0(x) := \frac{h_0(x)}{\pi_0}.
\end{equation}
\end{thm}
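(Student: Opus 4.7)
The plan is to establish the formula $\pi_0 = \int h_0$ by proving the two matching inequalities, and then extract uniqueness and attainment as essentially free by-products of the equality case. The key observation, which drives everything, is that symmetry of $g$ in the definition \eqref{symmetric_pi0_definition} automatically forces the dominating inequality to hold not just pointwise against $f$ but simultaneously against $f(\cdot)$ and $f(-\cdot)$, hence against their pointwise minimum $h_0$.

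\textbf{Lower bound on $\pi_0$.} First I would verify that $h_0(x) := \min\{f(x), f(-x)\}$ is itself even, and satisfies $h_0 \leq f$ pointwise. If $\int h_0 = 0$ there is nothing to show, so assume $\int h_0 > 0$. Then $g_0 := h_0 / \int h_0$ is a symmetric probability density and $f \geq (\int h_0) \, g_0$ a.e., which places $\int h_0$ in the set over which the supremum is taken. Hence $\pi_0 \geq \int h_0$, and the supremum in \eqref{symmetric_pi0_definition} is attained at $g_0$ with value $\pi_0 = \int h_0 > 0$.

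\textbf{Upper bound on $\pi_0$.} Suppose $\pi \geq 0$ and $g \in \cS$ satisfy $f \geq \pi g$ a.e. Applying this at $x$ and at $-x$ and using $g(-x) = g(x)$ yields $f(x) \geq \pi g(x)$ and $f(-x) \geq \pi g(x)$ a.e., so $h_0(x) \geq \pi g(x)$ a.e. Integrating gives $\int h_0 \geq \pi \int g = \pi$. Taking the supremum over admissible $\pi$ yields $\pi_0 \leq \int h_0$. Combined with the previous step, $\pi_0 = \int h_0$, which is \eqref{symmetric_pi0_value}.

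\textbf{Uniqueness of the maximizer.} Assume $\pi_0 > 0$ and let $g \in \cS$ be any density achieving the supremum, i.e., $f \geq \pi_0 g$ a.e. The argument of the previous paragraph gives $h_0 \geq \pi_0 g$ a.e., while integrating both sides produces $\int h_0 \geq \pi_0 \int g = \pi_0 = \int h_0$. Thus $h_0 - \pi_0 g$ is nonnegative a.e.\ with zero integral, forcing $h_0 = \pi_0 g$ a.e., i.e., $g = g_0$ a.e., which is the claimed uniqueness (up to Lebesgue-null sets, consistent with the footnote convention).

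The argument is quite short; there is no real obstacle. The only point requiring mild care is the use of $g(-x) = g(x)$ a.e.\ at the two points $x$ and $-x$ simultaneously, which is justified because Lebesgue measure is invariant under $x \mapsto -x$, so the exceptional null set can be symmetrized without loss.
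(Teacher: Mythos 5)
Your proof is correct and follows essentially the same route as the paper's: both hinge on the observation that any admissible even (sub)density dominated by $f$ is automatically dominated by $h_0 = f(\cdot) \wedge f(-\cdot)$, giving the upper bound, with the lower bound and uniqueness coming from $h_0$ itself being admissible and from the equality case of integrating an a.e.\ pointwise inequality. The only cosmetic difference is that the paper first recasts the supremum as being over even functions $h$ with $0 \le h \le f$ (i.e., $h = \pi g$), whereas you work directly with the pairs $(\pi, g)$; your explicit remark about symmetrizing the exceptional null set is a point the paper leaves implicit.
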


\begin{proof}
The parameter $\pi_0$ can be equivalently defined as 
\begin{align}
\pi_0 = \sup \big\{\textstyle\int h: \text{$h$ is even and $0 \le h \le f$ a.e.}\big\}.
\end{align}
Note that $h_0$, as defined in the statement, satisfies the above conditions, implying that $\pi_0 \ge \int h_0$. 
Take $h$ satisfying these same conditions, namely, $h(x) = h(-x)$ and $0 \le h(x) \le f(x)$ for almost all $x$. Then, for almost any $x$, $h(x) \le f(x)$ and $h(-x) \le f(-x)$, implying that $h(x) \le f(x) \wedge f(-x) = h_0(x)$. 
(Here and elsewhere, $a \wedge b$ is another way of denoting $\min(a, b)$.) 
Hence, $\int h \le \int h_0$ with equality if and only if $h = h_0$ a.e., in particular implying that $\pi_0 \le \int h_0$. We have thus established that $\pi_0 = \int h_0$, and also that $\int h = \pi_0$ if and only if $h = h_0$ a.e.. This not only proves \eqref{symmetric_pi0_value}, but also \eqref{symmetric_g0}, essentially by definition.  
\end{proof}

We have thus established that, in the setting of this section, the background component as defined above is given by 
\begin{equation}
h_0(x) = \pi_0 g_0(x) = \min \{ f(x), f(-x) \},
\end{equation}
and $f$ can be expressed as a mixture of the background density and another, unspecified, density $u$, as follows:
\begin{equation}
f = \pi_0 g_0 + (1-\pi_0) u.
\end{equation}
The procedure is summarized in \tabref{symmetric_algorithm}. 
An illustration of this decomposition is shown in \figref{illustration_symmetric}. By construction, the density $u$ is such that it has no symmetric background component in that, for almost every $x$, $u(x) = 0$ or $u(-x) = 0$.

\begin{table}[htpb]
\centering
\caption{Symmetric background computation.}
\label{tab:symmetric_algorithm}
\bigskip
\setlength{\tabcolsep}{0.22in}
\begin{tabular}{ p{0.9\textwidth}  }
\toprule

{\textbf{inputs}: density ${f}$, given center of symmetry $c_0$ or candidate center points $\{c_1, c_2, \dots, c_k$\}} \\ \midrule

\textbf{if} center of symmetry is not provided \textbf{then}

\hspace{3mm} \textbf{for} $i=1,\dots,k$ \textbf{do}

\hspace{3mm} \hspace{3mm} $h_i(x) = \min \{ {f}(x), {f}(2c_i - x) \}$

\hspace{3mm} \hspace{3mm} $\pi_i(x) = \int_{-\infty}^{\infty} h_i(x) dx $

\hspace{3mm} $\beta = \argmax_i \pi_i(x)$

\hspace{3mm} $c_0 = c_{\beta}$

$h_0(x) = \min \{ {f}(x), {f}(2c_0 - x) \}$

$\pi_0 = \int_{-\infty}^{\infty} h_0(x) dx$

$g_0(x) = h_0(x) / \pi_0$ \\

\midrule
\textbf{return} $c_0, \pi_0, g_0, h_0$
\\

\bottomrule
\end{tabular}
\end{table}

\begin{figure}[htpb]
    \centering
     
    \centering
    
    \subfigure[Decomposition of $f$ with center of symmetry specified as $0$ (dotted).]{\label{fig:a}\includegraphics[scale=0.35]{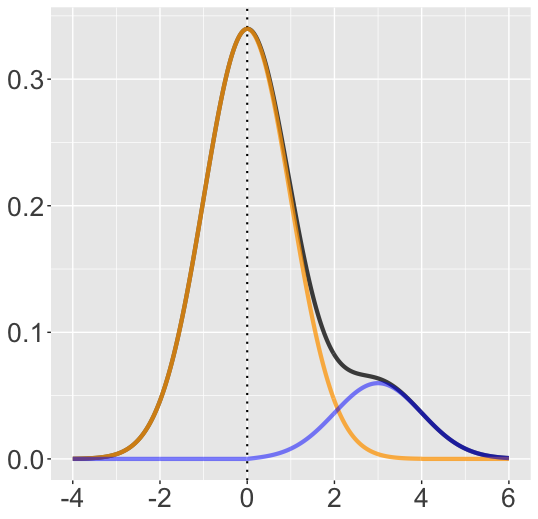}}\qquad
    \centering
    \subfigure[Decomposition of $f$ with center of symmetry, $0.04$ (dotted), found by maximization. ]{\label{fig:b}\includegraphics[scale=0.35]{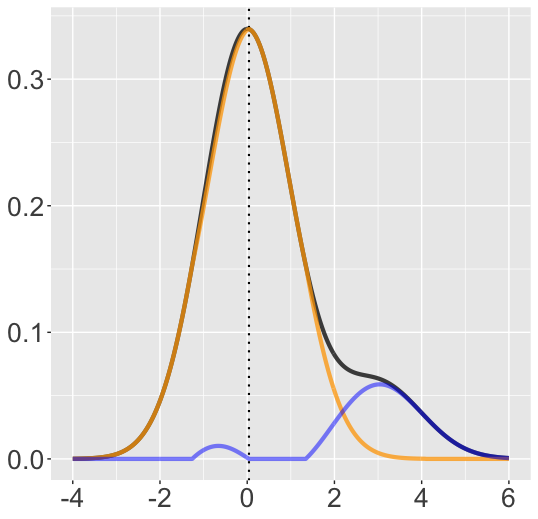}}
    \caption{The density $f$ of the Gaussian mixture $0.85 \text{ } \mathcal{N}(0, 1) + 0.15 \text{ } \mathcal{N} (0, 1)$, in black, and its decomposition into $\pi_0 g_0$, in orange, and $(1 - \pi_0) u$, in blue. We specify the center of symmetry as $0$ on the left, and we do not specify the center of symmetry on the right. Notice $\pi_0 = 0.850$ on the left and $\pi_0 = 0.860$ on the right. }
    \label{fig:illustration_symmetric}
\end{figure}

\subsection{Estimation and consistency}

When all we have to work with is a sample --- $x_1, x_2, \dots, x_n \in \bbR$ --- we adopt a straightforward plug-in approach: We estimate the density $f$, obtaining $\hat f$, and apply the procedure of \tabref{symmetric_algorithm}, meaning, we compute $\hat h_0(x) := \min \{ \hat f(x), \hat f(-x) \}$. If we want estimates for the background density and proportion, we simply return $\hat\pi_0 := \int \hat h_0$ and $\hat g_0 := \hat h_0/\hat\pi_0$. (By convention, we set $\hat g_0$ to the standard normal distribution if $\hat\pi_0 = 0$.) 

We say that $\hat f = \hat f_n$ is locally uniformly consistent for $f$ if $\E[\esssup_{x \in I} |\hat f_n(x) - f(x)|] \to 0$ as $n\to\infty$ for any bounded interval $I$.
(Here and elsewhere, $\esssup_{x \in I} f(x)$ denotes the essential supremum of $f$ over the set $I$.)
We note that this consistency condition is satisfied, for example, when $f$ is continuous and $\hat f$ is the kernel density estimator with the Gaussian kernel and bandwidth chosen by cross-validation \citep{chow1983consistent}. 

\begin{thm}
\label{thm:symmetric_consistency}
Suppose that $\hat f$ is a true density and locally uniformly consistent for $f$. Then $\hat h_0$ is locally uniformly consistent for $h_0$ and $\hat \pi_0$ is consistent, and if $\pi_0 > 0$, then $\hat g_0$ is locally uniformly consistent for $g_0$.
\end{thm}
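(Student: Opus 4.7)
The plan is to derive the three conclusions in the stated order, each building on its predecessor. The main algebraic workhorse is the pointwise bound $|\min(a,b) - \min(c,d)| \leq |a-c| + |b-d|$, which follows from $\min$ being $1$-Lipschitz in each argument. Applied with $a = \hat f(x)$, $b = \hat f(-x)$, $c = f(x)$, $d = f(-x)$, it gives $|\hat h_0(x) - h_0(x)| \leq |\hat f(x) - f(x)| + |\hat f(-x) - f(-x)|$ almost everywhere. For any bounded interval $I$, the reflection $-I$ is also bounded, so taking essential supremum and then expectation, the local uniform consistency of $\hat f$ on $I$ and on $-I$ transfers immediately to $\hat h_0$.

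For consistency of $\hat\pi_0 = \int \hat h_0$, I would fix a symmetric interval $I = [-M, M]$ and split
\begin{equation*}
|\hat\pi_0 - \pi_0| \leq \int_I |\hat h_0 - h_0| + \int_{I^c} \hat h_0 + \int_{I^c} h_0.
\end{equation*}
The first term is at most $2M \cdot \esssup_I |\hat h_0 - h_0|$, which vanishes by the previous step. The last term is at most $\int_{I^c} f$, made arbitrarily small by choosing $M$ large, since $f$ is a density. The remaining tail is handled via $\int_{I^c} \hat h_0 \leq \int_{I^c} \hat f = 1 - \int_I \hat f$, combined with $\int_I \hat f \geq \int_I f - 2M \cdot \esssup_I |\hat f - f|$. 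Choosing $M$ large first, then sending $n \to \infty$, gives $\hat\pi_0 \to \pi_0$ (in expectation, hence in probability).

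For $\hat g_0$ when $\pi_0 > 0$, I would use the identity
\begin{equation*}
\hat g_0 - g_0 = \frac{\hat h_0 - h_0}{\hat\pi_0} - g_0 \cdot \frac{\hat\pi_0 - \pi_0}{\hat\pi_0},
\end{equation*}
noting that $\hat\pi_0 \geq \pi_0/2$ with high probability by the previous step, so division is harmless. On any bounded $I$, $\esssup_I g_0 \leq (\esssup_I f)/\pi_0 < \infty$ since $f$ is locally essentially bounded (a mild regularity condition satisfied, for instance, when $f$ is continuous as required for the kernel estimator cited in the paper). Both terms on the right-hand side then vanish locally uniformly on $I$ by the previous two steps.

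The main obstacle is the middle step: local uniform consistency is strictly weaker than $L^1$ consistency, so there is no automatic route from $\hat h_0 \to h_0$ in the uniform sense to $\int \hat h_0 \to \int h_0$. The key trick is that both $\hat h_0$ and $h_0$ are dominated by true densities with unit mass, so the tails outside a large symmetric interval can be made uniformly small by combining $\int \hat f = \int f = 1$ with the locally uniform approximation on that interval.
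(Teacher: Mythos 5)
Your proposal is correct and follows essentially the same route as the paper: a Lipschitz bound on $\min$ to transfer local uniform consistency from $\hat f$ to $\hat h_0$, followed by the same tail-splitting argument (exploiting that $\hat h_0 \le \hat f$ and $h_0 \le f$ are dominated by unit-mass densities) to pass from local uniform convergence to convergence of the integrals. The only difference is cosmetic --- you use the sum $|a-c|+|b-d|$ where the paper uses the max --- and you additionally write out the $\hat g_0$ step, which the paper leaves implicit.
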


\begin{proof}
All the limits that follows are as the sample size diverges to infinity.
We rely on the elementary fact that, for $a_1, a_2, b_1, b_2 \in \bbR$,
\begin{equation}
\big|\min\{a_1, b_1\} - \min\{a_2, b_2\}\big|
\le \max\{|a_1 - a_2|, |b_1 - b_2|\},
\end{equation}
to get that, for all $x$,
\begin{equation}
|\hat h_0(x) - h_0(x)|
\le \max\{|\hat f(x) - f(x)|, |\hat f(-x) - f(-x)|\},
\end{equation}
implying that $\hat h_0$ is locally uniformly consistent for $h_0$.
To be sure, take a bounded interval $I$, which we assume to be symmetric without loss of generality. Then 
\begin{align*}
\esssup_{x \in I} |\hat h_0(x) - h_0(x)|
&\le \esssup_{x \in I} |\hat f(x) - f(x)| \vee \esssup_{x \in I} |\hat f(-x) - f(-x)| \\
&= \esssup_{x \in I} |\hat f(x) - f(x)|,
\end{align*}
and we then use the fact that $\E[\esssup_I |\hat f - f|] \to 0$.
(Here and elsewhere, $a \vee b$ is another way of denoting $\max(a, b)$.)


To conclude, it suffices to show that $\hat\pi_0$ is consistent for $\pi_0$.
Fix $\eps > 0$ arbitrarily small. There is a bounded interval $I$ such that $\int_I f \ge 1 - \eps$. 
Then, by the fact that $0\le h_0 \le f$ a.e., 
\[
\int_I h_0 \le \pi_0 = \int h_0 \le \int_I h_0 + \int_{I^\comp} f \le \int_I h_0 + \eps.
\]
($I^\comp$ denotes the complement of $I$, meaning, $I^\comp = \bbR \setminus I$.)
Similarly, by the fact that $0\le \hat h_0 \le \hat f$ a.e., 
\[
\int_I \hat h_0 \le \hat\pi_0 = \int \hat h_0 \le \int_I \hat h_0 + \int_{I^\comp} \hat f.
\]
From this we gather that
\begin{equation*}
\int_I \hat h_0 - \int_I h_0 - \eps
\le
\hat\pi_0 - \pi_0
\le \int_I \hat h_0 - \int_I h_0 + \int_{I^\comp} \hat f.
\end{equation*}
Thus consistency of $\hat\pi_0$ follows if we establish that $\limsup \int_{I^\comp} \hat f \le \eps$ and that $\int_I \hat h_0 - \int_I h_0 \to 0$.
The former comes from the fact that 
\begin{align*}
\int_{I^\comp} \hat f 
&= \int_{I^\comp} \hat f - \int_{I^\comp} f + \int_{I^\comp} f \\
&\le \int_{I} (\hat f - f) + \eps \\
&\le |I| \esssup_I |\hat f - f| + \eps \to \eps,
\end{align*}
using the fact that $f$ and $\hat f$ are densities and that $\int_{I^\comp} f \le \eps$.
($|I|$ denotes the Lebesgue measure of $I$, meaning its length when $I$ is an interval.)
For the latter, we have
\begin{align*}
\left|\int_I \hat h_0 - \int_I h_0\right|
&\le \int_I |\hat h_0 - h_0|
&\le |I| \esssup_I |\hat h_0 - h_0| \to 0,
\end{align*}
having already established that $\hat h_0$ is locally uniformly consistent for $h_0$.
\end{proof}

\paragraph{Confidence interval and confidence band}
Beyond mere pointwise consistency, suppose that we have available a confidence band for $f$, which can be derived under some conditions on $f$ from a kernel density estimator --- see \citep{chen2017tutorial} or \citep[Ch 6.4]{gine2021mathematical}.

\begin{thm}
Suppose that for some $\alpha \in (0,1)$, we have at our disposal $\hat f_l$ and $\hat f_u$ such that 
\begin{equation}
\label{conf1}
\P \big(\hat f_l(x) \leq f(x) \leq \hat f_u(x), \text{ for almost all $x$}\big) \geq 1 - \alpha.
\end{equation}
Then, with probability at least $1-\alpha$, 
\begin{align}
\label{conf2}
\hat\pi_l \le \pi_0 \le \hat\pi_u, 
&& \hat g_l \le g_0 \le \hat g_u \text{ a.e.},
\end{align}
where 
\begin{align*}
\hat\pi_l := \int \hat h_l,
&& \hat\pi_u := \int \hat h_u, 
&& \hat h_l(x) := \min \{ \hat{f}_l(x), \hat{f}_l(-x)\},
&& \hat h_u(x) := \min \{ \hat{f}_u(x), \hat{f}_u(-x)\}.
\end{align*}

\end{thm}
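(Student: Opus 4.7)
The plan is to condition on the high-probability event
\[
E := \big\{\hat f_l(x) \le f(x) \le \hat f_u(x) \text{ for almost all } x\big\},
\]
which by hypothesis satisfies $\P(E) \ge 1-\alpha$, and show that on $E$ all of the desired bounds hold deterministically. The engine behind the argument is the elementary monotonicity of the minimum: if $a_l \le a \le a_u$ and $b_l \le b \le b_u$, then $\min\{a_l,b_l\} \le \min\{a,b\} \le \min\{a_u,b_u\}$. I would invoke this once at the point $x$ and once at $-x$, using that on $E$ the bounds $\hat f_l(\pm x) \le f(\pm x) \le \hat f_u(\pm x)$ hold simultaneously for almost every $x$ (a countable union of null sets remains null). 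This gives
\[
\hat h_l(x) \le h_0(x) \le \hat h_u(x) \quad \text{for almost all } x,
\]
which is the sandwich for the background density proper.

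Next, integrating this pointwise sandwich over $\bbR$ yields $\hat\pi_l \le \pi_0 \le \hat\pi_u$, using $\pi_0 = \int h_0$ from the preceding theorem. This settles the confidence interval for $\pi_0$. For the confidence band on $g_0 = h_0/\pi_0$, I would combine the two sandwiches by setting
\[
\hat g_l := \hat h_l/\hat\pi_u, \qquad \hat g_u := \hat h_u/\hat\pi_l,
\]
so that on the event $E$ (and assuming $\hat\pi_l>0$, which one needs for $\hat g_u$ to be defined),
\[
\hat g_l(x) = \frac{\hat h_l(x)}{\hat\pi_u} \le \frac{h_0(x)}{\pi_0} = g_0(x) \le \frac{\hat h_u(x)}{\hat\pi_l} = \hat g_u(x)
\]
for almost every $x$, since the numerator bound goes in one direction and the denominator bound in the opposite direction.

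There is really no serious obstacle here; this is a deterministic propagation of the confidence statement through the monotone operations (taking $\min$, integrating, ratio) that define $h_0$, $\pi_0$, and $g_0$ in terms of $f$. The only mild care needed is bookkeeping of the almost-everywhere qualifier (taking the reflection $x \mapsto -x$ preserves Lebesgue-null sets, so the sandwich on $E$ survives at $-x$) and handling the degenerate case $\pi_0 = 0$ or $\hat\pi_l = 0$ for the density band, where one would either adopt a convention as was done for the estimator $\hat g_0$ or restrict the band statement to the event $\{\hat\pi_l > 0\}$. Since both confidence statements rest on the same event $E$, they hold jointly with probability at least $1-\alpha$, which is exactly what is claimed in \eqref{conf2}.
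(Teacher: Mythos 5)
Your proposal is correct and follows essentially the same route as the paper: condition on the confidence-band event, apply the monotonicity of the minimum at $x$ and $-x$ to obtain $\hat h_l \le h_0 \le \hat h_u$ a.e., and then integrate and take ratios. The paper leaves the last steps as ``everything else follows immediately''; your explicit definitions $\hat g_l = \hat h_l/\hat\pi_u$ and $\hat g_u = \hat h_u/\hat\pi_l$, and your attention to the degenerate case $\hat\pi_l = 0$, simply fill in those details correctly.
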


\begin{proof}
Let $\Omega$ be the event that $\hat f_l(x) \leq f(x) \leq \hat f_u(x)$ for almost all $x$, and note that $\P(\Omega) \ge 1-\alpha$ by assumption. Assuming that $\Omega$ holds, we have, for almost all $x$,
\begin{align*}
\hat{f}_l (x) \leq f(x) \leq \hat{f}_u (x),
&& \hat{f}_l (-x) \leq f(-x) \leq \hat{f}_u (-x),
\end{align*}
and taking the minimum in the corresponding places yields
\begin{equation}
\label{symmetric_minimum_bound}
\hat h_l(x) \leq h_0(x) \leq \hat h_u(x).
\end{equation}
Everything else follows immediately from this.
\end{proof}

In words, we apply the procedure of \tabref{symmetric_algorithm} to the lower and upper bounds, $\hat f_l$ and $\hat f_u$. If the center of symmetry is not provided, then we determine it based on $\hat f$, and then use that same center for $\hat f_l$ and $\hat f_u$.

\subsection{Numerical experiments}
\label{sec:symmetric_experiments}

In this subsection we provide simulated examples when the background distribution is taken to be symmetric. We acquire $\hat{f}$ by kernel density estimation using the Gaussian kernel with bandwidth selected by cross-validation \citep{rudemo1982empirical, stone1984asymptotically, arlot2010survey, silverman1986density, sheather1991reliable}, where the cross-validated bandwidth selection is implemented in the \textsf{kedd} package \citep{guidoum2015kernel}. The consistency of this density estimator has been proven in \citep{chow1983consistent}. Although there are many methods that provide confidence bands for kernel density estimator, for example \citep{bickel1973some, gine2010confidence}, for consideration of simplicity and intuitiveness, our simultaneous confidence band (in the form of $\hat{f}_l$ and $\hat{f}_u$) used in the experiments is acquired from bootstrapping a debiased estimator of the density as proposed in  \citep{cheng2019nonparametric}. For a comprehensive review on the area of kernel density and confidence bands, we point the reader to the recent survey paper \citep{chen2017tutorial} and textbook \citep[Ch 6.4]{gine2021mathematical}.

In the experiments below, we carry out method and report the estimated proportion $\hat{\pi}_0$, as well as its $95\%$ confidence interval $(\hat{\pi}_0^{\rm{L}}, \hat{\pi}_0^{\rm{U}})$. 
We consider both the situation where the center of symmetry is given, and the situation where it is not. 
In the latter situation, we also report the background component's estimated center, which is selected among several candidate centers and chosen as the one giving the largest symmetric background proportion. 

We are not aware of other methods for estimating the quantity $\pi_0$, but we provide a comparison with several well-known methods that estimate similar quantities. 
To begin with, we consider the method of \cite{patra}, which estimates the quantity $\theta_0$ as defined in $\eqref{theta_0_definition}$.
We let $\hat{\theta}_0^{\rm{PSC}}$, $\hat{\theta}_0^{\rm{PSH}}$, $\hat{\theta}_0^{\rm{PSB}}$ denote the constant, heuristic, and $95\%$ upper bound estimator on $\theta_0$, respectively. 
We then consider estimators of the quantity $\theta$, the actual proportion of the given background component $f_b$ in the mixture density
\begin{equation}
\label{theta_definition}
f = \theta f_b + (1-\theta) u,
\end{equation}
where $u$ denotes the unknown component. 
Note that $\theta_0$ and $\pi_0$ may be different from $\theta$.
This mixture model may not be identifiable in general, and we discuss this issue down below. 
We also consider the estimator of \citep{efron2007size}, denoted $\hat{\theta}^{\rm{E}}$, and implemented in package \textsf{locfdr}\,\footnote{~\url{https://cran.r-project.org/web/packages/locfdr/index.html}}. This method requires the unknown component to be located away from $0$, and to be have heavier tails than the background component. 
In addition, when the $p$-values are known, \cite{meinshausen2006estimating} provide a $95\%$ upper bound on the proportion of the null component, and we include that estimator also, denoted $\hat{\theta}^{\rm{MR}}$, and implemented in package \textsf{howmany}\,\footnote{~\url{https://cran.r-project.org/web/packages/howmany/howmany.pdf}}. Finally, when the distribution is assumed to be a Gaussian mixture, \cite{cai2010optimal} provide an estimator, denoted $\hat{\theta}^{\rm{CJ}}$, when the unknown component is assumed to have larger standard deviation than the background component. $\hat{\theta}^{\rm{CJ}}$ requires the specification of a parameter $\gamma$, and following the advice given by the authors, we select $\gamma = 0.2$. 
Importantly, unlike our method, these other methods assume knowledge of the background distribution. (Note that the methods of \cite{efron2007size} and \cite{cai2010optimal} do not necessitate full knowledge of the background distribution, but we provide them with that knowledge in all the simulated datasets.)
We summarize the methods used in our experiments in \tabref{other_methods_summary}.
For experiments in situations where the background component is misspecified, we invite the reader to \secref{incorrect_background_subsection}.

\begin{table}[htpb]
\centering\small
\caption{Summary of the methods considered in our experiments.}
\label{tab:other_methods_summary}
\bigskip
\setlength{\tabcolsep}{0.12in}
\begin{tabular}{p{0.14\textwidth} 
p{0.2\textwidth} 
p{0.25\textwidth} 
p{0.26\textwidth}  
}
\toprule
{\bf Estimator} & {\bf Reference} & {\bf Description } & {\bf Background information needed} \\ 
\midrule
$\hat{\pi}_0, \hat{\pi}_0^{\rm{L}}, \hat{\pi}_0^{\rm{U}}$ & Current paper & Estimator of $\pi_0$ with $95\%$ lower and upper confidence bounds. & Requires background distribution to be symmetric (note the requirement becomes monotonic in \secref{monotone} or log-concave in \secref{log-concave}). \\ 
\midrule
$\hat{\theta}_0^{\rm{PSC}}, \hat{\theta}_0^{\rm{PSH}}, \hat{\theta}_0^{\rm{PSB}}$ & \citep{patra} & Constant, heuristic, and $95\%$ upper bound estimates of  $\theta_0$. & Requires complete knowledge on the background distribution.\\ 
\midrule
$\hat{\theta}_0^{\rm{E}}$ & \citep{efron2007size} & Estimator of $\theta$. & Either requires full knowledge of background distribution or can estimate the background distribution when it has shape similar to a Gaussian distribution centered around $0$. \\ 
\midrule
$\hat{\theta}_0^{\rm{MR}}$ & \citep{meinshausen2006estimating} & $95\%$ upper bound of $\theta$.  & Requires complete knowledge on the background distribution. \\ 
\midrule
$\hat{\theta}_0^{\rm{CJ}}$ & \citep{cai2010optimal} & Estimator of $\theta$. & Either requires full knowledge of background distribution or can estimate the background distribution when it is Gaussian. \\

\bottomrule
\end{tabular}
\end{table}

We consider four different situations as listed in \tabref{symmetric_simulation_situations}. Each situation's corresponding $\theta$, $\theta_0$, and $\pi_0$, defined as in \eqref{symmetric_pi0_definition}, are also presented, where $\theta_0$ and $\pi_0$ are obtained numerically based on knowledge of $f$. For each model, we generate a sample of size $n = 1000$ and compute all the estimators described above. We repeat this process $1000$ times. 
We transform the data accordingly when applying the comparison methods. 
The result of our experiment are reported, in terms of the mean values as well as standard deviations, in \tabref{symmetric_simulation_numbers}. It can be seen that in most situations, our estimator achieves comparable if not better performance when estimating $\pi_0$ as compared to the other methods for the parameter they are meant to estimate ($\theta$ or $\theta_0$). We also note that our method is significantly influenced by the estimation of $\hat{f}$, therefore in situations where $\hat{f}$ deviates from $f$ often, our estimator will likely result in higher error. In addition, it is clear from the experiments that specifying the center of symmetry is unnecessary.

\begin{table}[htpb]
\centering\small
\caption{Simulated situations for the estimation of a symmetric background component, together with the corresponding values of $\theta$, $\theta_0$, $\pi_0$ (unspecified center), and $\pi_{00}$ (given center), obtained numerically (and rounded at 3 decimals).}
\label{tab:symmetric_simulation_situations}
\bigskip
\setlength{\tabcolsep}{0.03in}
\begin{tabular}{p{0.1\textwidth} 
p{0.525\textwidth} p{0.075\textwidth} p{0.075\textwidth} p{0.075\textwidth} p{0.075\textwidth} 
}
\toprule
{\bf Model} & {\bf Distribution} & {\bf $\theta $} & {\bf $\theta_0$} & {\bf $\pi_0$} & {\bf $\pi_{00}$}\\ 
\midrule
\textsf{S1} & $0.85 \text{ }\mathcal{N} (0, 1) + 0.15 \text{ }\mathcal{N}(3, 1) $ & 0.850 & 0.850 & 0.860 & 0.850 \\ 
\midrule
\textsf{S2} & $0.95 \text{ }\mathcal{N} (0, 1) + 0.05 \text{ }\mathcal{N}(3, 1) $ & 0.950 & 0.950 & 0.950 & 0.950 \\ 
\midrule
\textsf{S3} & $0.85 \text{ }\mathcal{N} (0, 1) + 0.1 \text{ }\mathcal{N}(2.5, 0.75) + 0.05 \text{ }\mathcal{N}(-2.5, 0.75) $ & 0.850 & 0.851 & 0.954 & 0.950 \\
\midrule
\textsf{S4} & $0.85 \text{ }\mathcal{N} (0, 1) + 0.1 \text{ }\mathcal{N}(2.5, 0.75) + 0.05 \text{ }\mathcal{N}(5, 0.75) $ & 0.850 & 0.850 & 0.858 & 0.850\\

\bottomrule
\end{tabular}
\end{table}

\begin{table}[htpb]
\centering\small
\caption{A comparison of various methods for estimating a background component in the situations of \tabref{symmetric_simulation_situations}. For our method, the first and second rows in each situation are for when the center is unspecified, while the third and fourth rows are for when the center is specified to be the origin. Thus $\hat{\pi}_0$ on the first row of each situation is compared with $\pi_0$, $\hat{\pi}_0$ on the third row of each situation is compared with $\pi_{00}$. Otherwise, the $\hat{\theta}_0^{\rm X}$ are compared with $\theta_0$, while the $\hat{\theta}^X$ are compared with $\theta$.}
\label{tab:symmetric_simulation_numbers}
\bigskip
\setlength{\tabcolsep}{0.02in}
\begin{tabular}{p{0.08\textwidth} 
p{0.075\textwidth} 
p{0.075\textwidth} p{0.075\textwidth} p{0.075\textwidth} 
p{0.08\textwidth}
p{0.075\textwidth} p{0.075\textwidth} p{0.075\textwidth} p{0.075\textwidth} p{0.075\textwidth} p{0.075\textwidth}}
\toprule
{\bf Model} & {Center} & {\bf $\hat{\pi}_0$} & {\bf $\hat{\pi}_0^{\rm{L}}$} & {\bf $\hat{\pi}_0^{\rm{U}}$} & {\bf Null} & {\bf $\hat{\theta}_0^{\rm{PSC}}$} & {\bf $\hat{\theta}_0^{\rm{PSH}}$} & {\bf $\hat{\theta}_0^{\rm{PSB}}$} & {\bf $\hat{\theta}^{\rm{E}}$} & {\bf $\hat{\theta}^{\rm{MR}}$} & {\bf $\hat{\theta}^{\rm{CJ}}$}\\ 
\midrule
\textsf{S1} & 0.068 & 0.857 & 0.571 & 1 & $\mathcal{N} (0,1)$ & 0.848 & 0.855 & 0.893 & 0.861 & 0.890 & 0.893 \\ 
\text{   } & (0.052) & (0.021) & (0.059) & (0) & & (0.024) & (0.023) & (0.018) & (0.023) & (0.014) & (0.085)\\ 
 & 0 & 0.835 & 0.561 & 1 &  &  &  & & \\ \text{   } & \textsf{given} & (0.022) & (0.058) & (0) &  &  & &\\
\midrule
\textsf{S2} & 0.024 & 0.936 & 0.631 & 1 & $\mathcal{N}(0,1)$ & 0.938 & 0.954 & 0.985 & 0.955 & 0.972 & 0.963 \\
\text{   } & (0.044) & (0.017) & (0.066) & (0) & & (0.023) & (0.022) & (0.015) & (0.026) & (0.008) & (0.082) \\
 & 0 & 0.925 & 0.623 & 1 &  &  &  &  &   \\ 
\text{   } & \textsf{given} & (0.019) & (0.067) & (0) & &  &  &  & \\
\midrule
\textsf{S3} & 0.046 & 0.945 & 0.597 & 1 & $\mathcal{N} (0,1)$ & 0.864 & 0.942 & 0.937 & 0.856 & 0.896 & 0.707\\ 
\text{   } & (0.050) & (0.019) & (0.063) & (0) & & (0.023) & (0.034) & (0.017) & (0.024) & (0.015) & (0.085)\\ 
 & 0 & 0.930 & 0.587 & 1 &  &  &  & & \\ \text{   } & \textsf{given} & (0.020) & (0.064) & (0) &  &  & &\\
\midrule
\textsf{S4} & 0.070 & 0.856 & 0.574 & 1 & $\mathcal{N}(0,1)$ & 0.846  & 0.854 & 0.891 & 0.849 & 0.889  & 0.713  \\ 
\text{   } & (0.056) & (0.021) & (0.061) & (0) & & (0.023) & (0.024) & (0.018) & (0.024) & (0.014) & (0.088)\\
& 0 & 0.833 & 0.563 & 1   &  &  &  &  &   \\ 
\text{   } & \textsf{given} & (0.022) & (0.060) & (0) & &  &  &  &  \\

\bottomrule
\end{tabular}
\end{table}

\subsection{Real data analysis}
\label{sec:symmetric_read_data}

In this subsection we examine six real datasets where the null component could be reasonably assumed to be symmetric. 

We begin with two datasets where we have sufficient information on the background component. The first one is the Prostate dataset \citep{singh2002gene}, which contains gene expression levels for $n=6033$ genes on $102$ men, $50$ of which are control subjects and $52$ are prostate cancer patients. The main objective is to discover the genes that have a different expression level on the control and prostate patient groups. For each gene, we conduct a two-sided two sample $t$ test on the control subjects and prostate patients, and then transform these $t$ statistics into $z$ values, using
\begin{equation}
    \label{symmetric_t_z_transformation}
    z_i = \Phi^{-1} (F_{100}(t_i)), \text{   } i = 1,2,\dots,6033,
\end{equation}
where $\Phi$ denotes the cdf of the standard normal distribution, and $F_{100}$ denotes the cdf of the $t$ distribution with 100 degrees of freedom. We work with these $n=6033$ $z$ values. From \citep{efron2007size} the background component here could be reasonably assumed to be $\mathcal{N} (0,1)$. The results of the different proportion estimators compared in \secref{symmetric_experiments} are shown in the first row of \tabref{symmetric_realdata}.
The fitted largest symmetric component as well as confidence bands are plotted in \figref{symmetric_realdata_picture_a}. 

Next we consider the Carina dataset \citep{walker2007velocity}, which contains the radial velocities of $n=1266$ stars in Carina, a dwarf spheroidal galaxy, mixed with those of Milky Way stars in the field of view. As \cite{patra} stated, the background distribution of the radial velocity, \textsf{bgstars}, can be acquired from \citep{robin2003synthetic}. The  various estimators are computed and shown in the second row of \tabref{symmetric_realdata}. The fitted largest symmetric component as well as confidence bands are plotted in \figref{symmetric_realdata_picture_b}.

\begin{table}[htpb]
\centering\small
\caption{Two real datasets where background component can be reasonably guessed or derived. We compare the same methods for extracting a background symmetric component as in \secref{symmetric_experiments}. (Note that we work with $z$ values here instead of $p$-values so our results for the Prostate dataset are slightly different from those reported by \cite{patra}.)}
\label{tab:symmetric_realdata}
\bigskip
\setlength{\tabcolsep}{0.035in}
\begin{tabular}{p{0.1\textwidth} 
p{0.07\textwidth} 
p{0.07\textwidth} p{0.07\textwidth} p{0.05\textwidth} 
p{0.08\textwidth}
p{0.07\textwidth} p{0.07\textwidth} p{0.07\textwidth} p{0.07\textwidth}
p{0.07\textwidth} p{0.07\textwidth}}
\toprule
{\bf Model} & {Center} & {\bf $\hat{\pi}_0$} & {\bf $\hat{\pi}_0^{\rm{L}}$} & {\bf $\hat{\pi}_0^{\rm{U}}$} & {\bf Null} & {\bf $\hat{\theta}_0^{\rm{PSC}}$} & {\bf $\hat{\theta}_0^{\rm{PSH}}$} & {\bf $\hat{\theta}_0^{\rm{PSB}}$} & {\bf $\hat{\theta}^{\rm{E}}$} & {\bf $\hat{\theta}^{\rm{MR}}$} & {\bf $\hat{\theta}^{\rm{CJ}}$} \\ 
\midrule
Prostate & 0 & 0.977 & 0.789 & 1 & $\mathcal{N} (0,1)$ & 0.931 & 0.941 & 0.975 & 0.931 & 0.956 & 0.867\\ 
\midrule
Carina & 59 & 0.540 & 0.071 & 1 & \textsf{bgstars} & 0.636 & 0.645 & 0.677 & 0.951 & 0.664 & 0.206 \\  

\bottomrule
\end{tabular}
\end{table}

\begin{figure}[htpb]
    
    \label{fig:symmetric_realdata_known}
    \centering
    \centering
    \subfigure[Prostate dataset]{\label{fig:symmetric_realdata_picture_a}\includegraphics[scale=0.355]{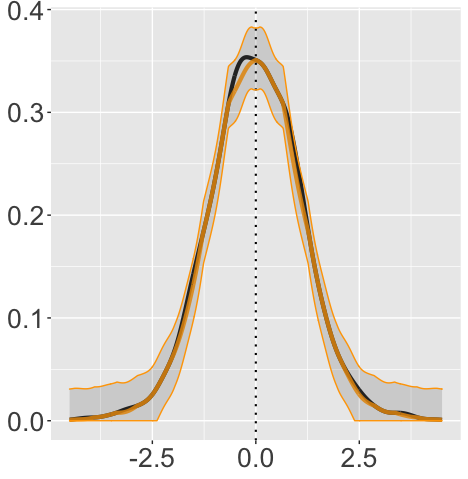}}\qquad
    \centering
    \subfigure[Carina dataset]{\label{fig:symmetric_realdata_picture_b}\includegraphics[scale=0.355]{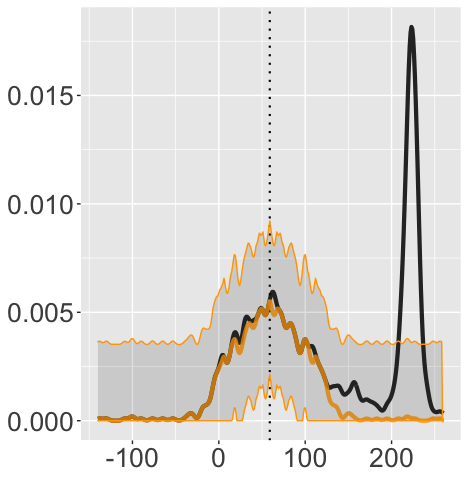}}
    \caption{Estimated symmetric component on the Prostate ($z$ values) and Carina (radial velocity) datasets: the black curve represents the fitted density; the center orange curve represents the computed $\hat{h}_0$; the top and bottom orange curves represent the $95\%$ simultaneous confidence bands for $h_0$; the estimated center of the symmetric component is indicated by a dotted vertical line.}
    \label{fig:prostate_carina}
\end{figure}

Aside from these two real datasets where we know the background distribution, we consider four other real datasets --- three microarray datasets and one police dataset --- where we do not know the null distribution \citep{efron2012large}. Here, out of the methods considered above, only our method and that of \citep{efron2007size} and \citep{cai2010optimal} are applicable. For the first comparison method we use the MLE estimation as presented in \citep[Sec 4]{efron2007size}, which is usually very close to the result of Central Matching as used in \citep{efron2012large}. For the second comparison method we use the estimator in \citep[Sec 3.2]{cai2010optimal}, although we use $\gamma = 0.1$ as the recommended value $\gamma = 0.2$ leads to significant underestimation of the background proportion. Note that both of these methods are still meant to estimate $\theta$.

The HIV dataset \citep{van2003cellular} consists of a study of $4$ HIV subjects and $4$ control subjects. The measurements of $n = 7680$ gene expression levels were acquired using cDNA microarrays on each subject. We compute $t$ statistics for the two sided $t$ test and then transform them into $z$ values using \eqref{symmetric_t_z_transformation}, with the degree of freedom being $6$ here. We would like to know what proportion of these genes do not show a significant difference in expression levels between HIV and control subjects. The results are summarized in the first row of \tabref{symmetric_unknown_realdata}.
The fitted largest symmetric component as well as confidence bands are shown in \figref{symmetric_realdata_picture_c}.

The Leukemia dataset comes from \citep{golub1999molecular}. There are $72$ patients in this study, of which $45$ have ALL (Acute Lymphoblastic Leukemia) and $27$ have AML (Acute Myeloid Leukemia), with AML being considered more severe. High density oligonucleotide microarrays gave expression levels on $n = 7128$ genes. Following \citep[Ch 6.1]{efron2012large}, 
the raw expression levels on each microarray, $x_{i,j}$ for gene $i$ on array $j$, were transformed to a normal score
\begin{equation}
    y_{i,j} = \Phi^{-1} \bigg{(} \big{(}\textsf{rank}(x_{i,j}) - 0.5\big{)} / n\bigg{)},
\end{equation}
where $\textsf{rank}(x_{i,j})$ denotes the rank of $x_{i,j}$ among $n$ raw values of array $j$. 
Then $t$ tests were then conducted on ALL and AML patients, and $t$ statistics were transformed to $z$ values according to \eqref{symmetric_t_z_transformation}, now with $70$ degrees of freedom. As before, we would like to know the proportion of genes that do not show a significant difference in expression levels between ALL and AML patients. The results are summarized in the second row of \tabref{symmetric_unknown_realdata}.
The fitted largest symmetric component as well as confidence bands are shown in \figref{symmetric_realdata_picture_d}.

The Parkinson dataset comes from \citep{lesnick2007genomic}. In this dataset, substantia nigra tissue --- a brain structure located in the mesencephalon that plays an important role in reward, addiction, and movement --- from postmortem the brain of normal and Parkinson disease patients were used for RNA extraction and hybridization, done on Affymetrix microarrays. In this dataset, there are $n = 54 277$ nucleotide sequences whose expression levels were measured on $16$ Parkinson's disease patients and $9$ control patients. We wish to find out the proportion of sequences that do not show significant difference between Parkinson and control patients. The results are summarized in the third row of \tabref{symmetric_unknown_realdata}.
The fitted largest symmetric component as well as confidence bands are shown in \figref{symmetric_realdata_picture_e}.

The Police dataset is analyzed in \citep{ridgeway2009doubly}. In 2006, based on $500 000$ pedestrian stops in New York City, each of the city's $n = 2749$ police officers that were regularly involved in pedestrian stops were assigned a $z$ score on the basis of their stop data, in consideration of possible racial bias. For details on computing this $z$ score, we refer the reader to \citep{ridgeway2009doubly, efron2012large}. Large positive $z$ values are considered as possible evidence of racial bias. We would like to know the percentage of these police officers that do not exhibit a racial bias in pedestrian traffic stops. The estimated proportion are reported on the last row of \tabref{symmetric_unknown_realdata}. 
The symmetric component as well as confidence bands are presented in \figref{symmetric_realdata_picture_f}.

\begin{table}[htpb]
\centering\small
\caption{Real datasets where background distribution is unknown and needs to be estimated. We compare the methods for extracting a background symmetric component among those in \secref{symmetric_experiments} that apply.}
\label{tab:symmetric_unknown_realdata}
\bigskip
\setlength{\tabcolsep}{0.03in}
\begin{tabular}{p{0.15\textwidth} 
p{0.1\textwidth} 
p{0.1\textwidth} p{0.1\textwidth} 
p{0.1\textwidth} p{0.1\textwidth} p{0.1\textwidth} 
}
\toprule
{\bf Model} & {Center} & {\bf $\hat{\pi}_0$} & {\bf $\hat{\pi}_0^{\rm{L}}$} & {\bf $\hat{\pi}_0^{\rm{U}}$} & {\bf $\hat{\theta}^{\rm{E}}$} & {\bf $\hat{\theta}^{\rm{CJ}}$} \\ 
\midrule
HIV & -0.62 & 0.950 & 0.775 & 1 & 0.940 & 0.926      \\  
\midrule
Leukemia & 0.16 & 0.918 & 0.639 & 1 & 0.911 & 0.820       \\  
\midrule
Parkinson & -0.18 & 0.985 & 0.924 & 1 & 0.998 & 0.993  \\  
\midrule
Police & 0.10 & 0.982 & 0.767 & 1 & 0.985 & 0.978       \\  
\bottomrule
\end{tabular}
\end{table}

\begin{figure}[htpb]
    
    \label{fig:symmetric_realdata_unknown}
    \centering

    \centering
    \subfigure[HIV dataset]{\label{fig:symmetric_realdata_picture_c}\includegraphics[scale=0.35]{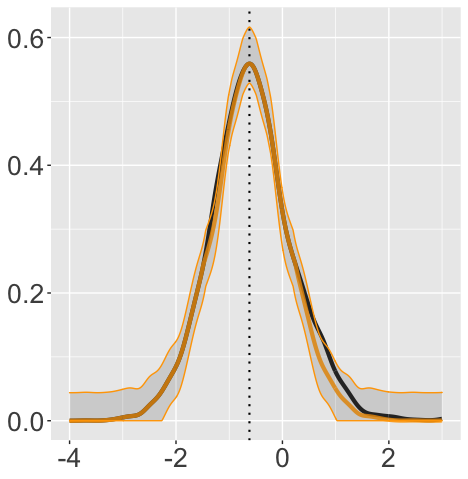}} \qquad
    \centering
    \subfigure[Leukemia dataset]{\label{fig:symmetric_realdata_picture_d}\includegraphics[scale=0.35]{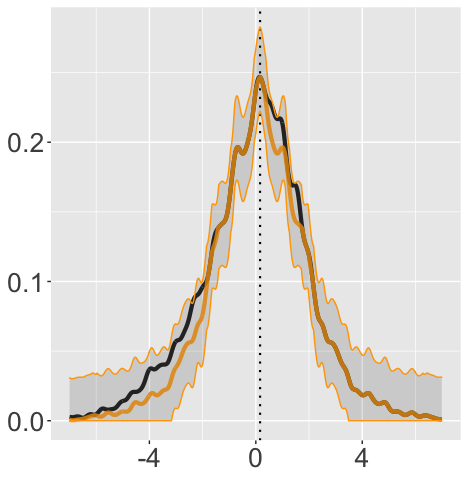}} \\
    \centering
    \subfigure[Parkinson dataset]{\label{fig:symmetric_realdata_picture_e}\includegraphics[scale=0.35]{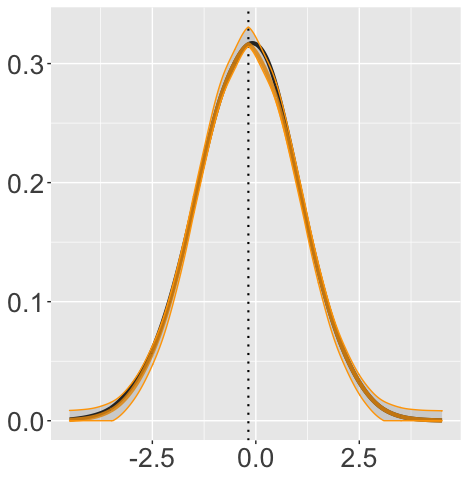}} \qquad
    \centering
    \subfigure[Police dataset]{\label{fig:symmetric_realdata_picture_f}\includegraphics[scale=0.35]{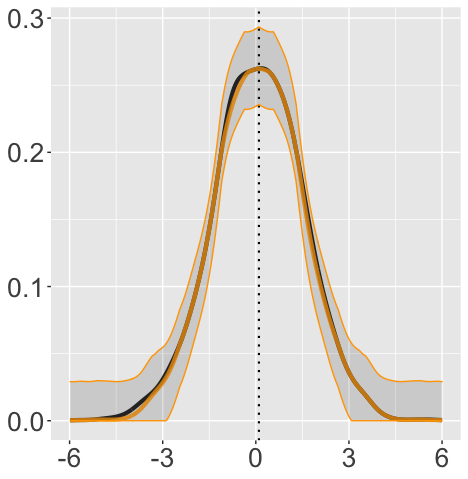}}

    \caption{Estimated symmetric component on HIV ($z$ values), Leukemia ($z$ values), Parkinson ($z$ values), and Police ($z$ scores) datasets: the black curve represents the fitted density; the center orange curve represents the computed $\hat{h}_0$; the top and bottom orange curves represent the $95\%$ simultaneous confidence bands for $h_0$; the estimated center of the symmetric component is indicated by a dotted vertical line.}
    \label{fig:real_data_symmetric_unknown}
\end{figure}

\section{Monotone background component}
\label{sec:monotone}

In this section, we turn our attention to extracting from a density its monotone background component following the Patra--Sen approach. For this to make sense, we only consider densities supported on $\bbR_+ = [0, \infty)$. In fact, all the densities we consider in this section will be supported on $\bbR_+$.
For such a density $f$, we thus define
\begin{equation}
\label{nonincreasing_pi0_definition}
\pi_0 := \sup \big\{\pi: \exists g \in \cM \text{ s.t. } f - \pi g \geq 0 \text{ a.e.}\big\},
\end{equation}
where $\cM$ is the class of monotone (necessarily non-increasing) densities on $\bbR_+$.
Note that $\pi_0 \in [0,1]$ is well-defined for any density $f$, with $\pi_0 = 1$ if and only if $f$ itself is monotone.

Recall that the essential infimum of a measurable set $A$, denoted $\essinf A$, is defined as the supremum over $t \in \bbR$ such that $A \cap (-\infty, t)$ has Lebesgue measure zero.
Everywhere in this section, we will assume that $f$ is c\`adl\`ag, meaning that, at any point, it is continuous from the right and admits a limit from the left.

\begin{thm}
\label{thm:nonincreasing}
Assuming $f$ is c\`adl\`ag, we have 
\begin{align}
\label{nonincreasing_pi0} 
\pi_0 = \int_{0}^\infty h_0(x) d x, 
&& h_0(x) := \essinf\{f(y) : y \leq x\}.
\end{align}
Moreover, if $\pi_0 > 0$ the supremum in \eqref{nonincreasing_pi0_definition} is attained by the following density and no other:
\begin{equation}
\label{nonincreasing_g0} 
g_0(x) := \frac{h_0(x)}{\pi_0}.
\end{equation}
\end{thm}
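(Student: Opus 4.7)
The plan is to mirror the symmetric case by first reformulating as an optimization over a single function $h$. The correspondence $h = \pi g$ (with inverse $\pi = \int h$ and $g = h/\pi$ when $\int h > 0$) identifies
\[
\pi_0 = \sup\bigl\{\textstyle\int h : h \text{ non-increasing on } \bbR_+ \text{ and } 0 \le h \le f \text{ a.e.}\bigr\},
\]
so the theorem reduces to showing that $h_0$ is the pointwise-a.e.\ maximum over this feasible set.

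That $h_0$ is feasible is immediate in two respects: $h_0 \ge 0$ since $f \ge 0$, and $h_0$ is non-increasing since enlarging $[0,x_1]$ to $[0,x_2]$ can only decrease the essential infimum. The subtle point, which I expect to be the main obstacle, is the inequality $h_0 \le f$ a.e. I plan to prove this by a rational level-set argument that exploits the monotonicity of $h_0$: for each rational $q > 0$ the superlevel set $B_q := \{x : h_0(x) > q\}$ is an initial interval containing $0$, and for every $x \in B_q$ the very definition of the essential infimum gives that $\{y \in [0,x] : f(y) \le q\}$ is null; letting $x$ tend to $\sup B_q$ from inside $B_q$ then shows $f > q$ a.e.\ on $B_q$, so $B_q \cap \{f < q\}$ is null. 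Since $\{f < h_0\}$ is contained in the countable union $\bigcup_q (B_q \cap \{f < q\})$ over positive rationals $q$, it is null as well.

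Optimality of $h_0$ will follow from a monotonicity argument applied to any competing feasible $h$: after replacing $h$ by a non-increasing representative (which exists because $h = \pi g$ with $g \in \cM$ defined only up to a null set), for any $x$ monotonicity gives $h(x) \le h(y)$ for every $y \in [0,x]$, while the constraint gives $h(y) \le f(y)$ for a.e.\ such $y$; taking the essinf over $y \in [0,x]$ yields $h(x) \le h_0(x)$ for all $x$, hence $\int h \le \int h_0$ with equality only when $h = h_0$ a.e. Combined with feasibility of $h_0$, this simultaneously proves \eqref{nonincreasing_pi0}, identifies $g_0 = h_0/\pi_0$ as in \eqref{nonincreasing_g0} as a density when $\pi_0 > 0$, and yields the uniqueness assertion (any maximizer $g$ produces $\pi_0 g$ in the feasible set with integral $\pi_0$, forcing $\pi_0 g = h_0$ a.e.). The c\`adl\`ag hypothesis on $f$ does not seem to enter the argument sketched here; it presumably serves later sections on estimation and computation.
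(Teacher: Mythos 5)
Your proof is correct and follows essentially the same route as the paper: reformulate $\pi_0$ as the supremum of $\int h$ over non-increasing $h$ with $0 \le h \le f$ a.e., and show that any such $h$ satisfies $h(x) \le \essinf\{f(y): y \le x\} = h_0(x)$ by combining the monotonicity of $h$ with the constraint $h \le f$, so that $\int h \le \int h_0$ with equality only if $h = h_0$ a.e. The one place you go beyond the paper is the rational level-set argument for the feasibility inequality $h_0 \le f$ a.e., which the paper simply asserts; that argument is sound, and your observation that the c\`adl\`ag hypothesis is not needed for the theorem itself is also accurate (the paper uses it only in the remark following the theorem, to identify $h_0$ with a pointwise minimum and to characterize when $\pi_0 = 0$).
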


Note that $\pi_0 = 0$ (i.e., $f$ has no monotone background component) if and only if $\essinf\{f(y) : y \le x\} = 0$ for some $x  >0$, or equivalently, if $\{x \in [0,t] : f(x) \le \eps\}$ has positive measure for all $t > 0$ and all $\eps > 0$. If $f$ is c\`adl\`ag, this condition reduces to $f(0) = 0$. Also, if $f$ is c\`adl\`ag, $h_0(x) = \min\{f(y) : y < x\}$.

\begin{proof}
Note that $\pi_0$ can be equivalently defined as 
\begin{align}
\pi_0 = \sup \big\{\textstyle{\int h}: \text{$h$ is monotone and $0 \le h \le f$ a.e.}\big\}.
\end{align}
Note that $h_0$, as defined in the statement, satisfies the above conditions, implying that $\pi_0 \ge \int h_0$. 
Take $h$ satisfying these same conditions, namely, $h$ is monotone and $0 \le h(x) \le f(x)$ for almost all $x$, say, for $x \in \bbR_+ \setminus A$ where $A$ has Lebesgue measure zero. Take such an $x$. Then for any $y \le x$ we have $h(x) \le h(y)$, and $h(y) \le f(y)$ if in addition $y \notin A$. Hence, 
\[
h(x) \le \inf\{f(y) : y < x, y \notin A\} \le \essinf\{f(y) : y < x\} = h_0(x),
\]
where the second inequality comes from the fact that $A$ has zero Lebesgue measure and the definition of essential infimum.
Hence, $\int h \le \int h_0$ with equality if and only if $h = h_0$ a.e., in particular implying that $\pi_0 \le \int h_0$. We have thus established that $\pi_0 = \int h_0$, and also that $\int h = \pi_0$ if and only if $h = h_0$ a.e.. This not only proves \eqref{nonincreasing_pi0}, but also \eqref{nonincreasing_g0}.  
\end{proof}
   
We have thus established that, in the setting of this section where $f$ is assumed to be c\`adl\`ag, the background component as defined above is given by 
\begin{equation}
h_0(x) = \pi_0 g_0(x) = \essinf\{f(y) : y < x\},
\end{equation}
and $f$ can be expressed as a mixture of the background density and another, unspecified, density $u$, as follows:
\begin{equation}
f = \pi_0 g_0 + (1-\pi_0) u.
\end{equation}
The procedure is summarized in \tabref{monotone_algorithm}.
An illustration of this decomposition is shown in \figref{illustration_nonics}. 
(In this section, $\cE(\sigma)$ denotes the exponential distribution with scale $\sigma$ and $\cG(\kappa, \sigma)$ denotes the Gamma distribution with shape $\kappa$ and scale $\sigma$. Recall that $\cE(\sigma) \equiv \cG(1, \sigma)$.)
By construction, the density $u$ is such that it has no monotone background component in that $\essinf\{u(y) : y < x\} = 0$ for any $x > 0$.

\begin{table}[htpb]
\centering
\caption{Monotone background computation.}
\label{tab:monotone_algorithm}
\bigskip
\setlength{\tabcolsep}{0.22in}
\begin{tabular}{ p{0.9\textwidth}  }
\toprule

{\textbf{inputs}: density ${f}$ defined on $[0, \infty)$} \\ \midrule

$h_0(x) = \essinf\{f(y) : y \leq x\}$

$\pi_0 = \int_0^\infty h_0(x) dx$

$g_0(x) = h_0(x) / \pi_0$ \\

\midrule
\textbf{return} $\pi_0, g_0, h_0$
\\

\bottomrule
\end{tabular}
\end{table}

\begin{figure}[htpb]
    \centering
    \includegraphics[width=0.3\textwidth]{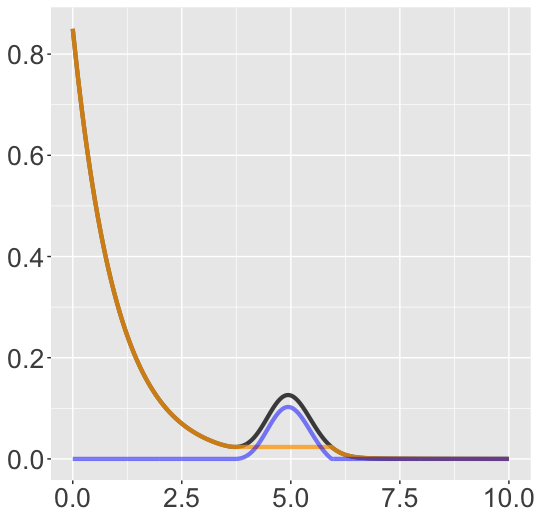}
    \caption{The density $f$ of the Gamma mixture $0.85 \text{ } \mathcal{Exp} (1) + 0.15 \text{ }  \mathcal{Gamma} (100, 1/20)$, in black, and its decomposition into $\pi_0 g_0$, in orange, and $(1 - \pi_0) u$, in blue. Notice that the orange curve has a flat part around the middle and is slightly different from $0.85 \text{ } \mathcal{Exp}(1)$.}
    \label{fig:illustration_nonics}
\end{figure}

\subsection{Estimation and consistency}

In practice, when all we have is a sample of observations, $x_1, \dots, x_n \in \bbR_+$, we first estimate the density, resulting in $\hat f$, and then compute the quantities defined in \eqref{nonincreasing_pi0} and \eqref{nonincreasing_g0} with $\hat f$ in place of $f$. Thus our estimates are
\begin{align}
\hat\pi_0 := \int_{0}^\infty \hat h_0(x) d x, 
&& \hat h_0(x) := \essinf\{\hat f(y) : y < x\},
&& \hat g_0(x) := \frac{\hat h_0(x)}{\hat\pi_0}.
\end{align}

\begin{thm}
Assume $f$ is c\`adl\`ag, and 
suppose that $\hat f$ is a true density, c\`adl\`ag, and is locally uniformly consistent for $f$. Then $\hat h_0$ is locally uniformly consistent for $h_0$ and $\hat \pi_0$ is consistent for $\pi_0$, and if $\pi_0 > 0$, then $\hat g_0$ is locally uniformly consistent fpr $g_0$.
\end{thm}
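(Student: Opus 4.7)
My plan is to mirror the three-step structure of the proof of Theorem~\ref{thm:symmetric_consistency}, replacing the elementary bound $|\min\{a_1,b_1\} - \min\{a_2,b_2\}| \le \max\{|a_1-a_2|,|b_1-b_2|\}$ by its continuous analog
\begin{equation*}
\bigl|\essinf_{y \in B} \hat f(y) - \essinf_{y \in B} f(y)\bigr| \le \esssup_{y \in B} |\hat f(y) - f(y)|,
\end{equation*}
valid for any measurable $B$ of positive Lebesgue measure (a short one-paragraph verification by choosing, on whichever side the gap lies, a positive-measure subset on which the smaller essential infimum is nearly attained). As in the symmetric proof, all limits below are taken as the sample size tends to infinity.

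First, I would establish local uniform consistency of $\hat h_0$ for $h_0$. Given a bounded interval $I \subset \bbR_+$, enlarge it to an initial segment $I = [0,T]$ without loss of generality (permissible because $h_0$ and $\hat h_0$ are non-increasing). Applying the displayed inequality with $B = [0,x]$ for each $x \in I$ gives
\begin{equation*}
\esssup_{x \in I} |\hat h_0(x) - h_0(x)| \le \esssup_{y \in [0,T]} |\hat f(y) - f(y)|,
\end{equation*}
whose expectation vanishes by the assumed local uniform consistency of $\hat f$.

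Next, for consistency of $\hat\pi_0$, I would repeat verbatim the truncation argument used in Theorem~\ref{thm:symmetric_consistency}. Fix $\eps > 0$, choose $T$ so that $\int_T^\infty f \le \eps$, set $I = [0,T]$, and use $0 \le h_0 \le f$ and $0 \le \hat h_0 \le \hat f$ almost everywhere to sandwich $\hat\pi_0 - \pi_0$ between $\int_I \hat h_0 - \int_I h_0 - \eps$ and $\int_I \hat h_0 - \int_I h_0 + \int_{I^\comp} \hat f$. The first-step conclusion, combined with $|I| < \infty$, forces $\int_I \hat h_0 - \int_I h_0 \to 0$, and the tail integral $\int_{I^\comp} \hat f$ is handled exactly as in the symmetric case: since $f$ and $\hat f$ are true densities, $\int_{I^\comp} \hat f = \int_{I^\comp} f + \int_I (f - \hat f)$, and the second term is bounded by $|I|\,\esssup_I |\hat f - f| \to 0$. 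Letting $\eps \downarrow 0$ finishes this step. Finally, for the local uniform consistency of $\hat g_0 = \hat h_0/\hat\pi_0$ when $\pi_0 > 0$, I would decompose
\begin{equation*}
\hat g_0(x) - g_0(x) = \frac{\hat h_0(x) - h_0(x)}{\hat\pi_0} + h_0(x) \Bigl(\frac{1}{\hat\pi_0} - \frac{1}{\pi_0}\Bigr),
\end{equation*}
and combine the already-established consistencies of $\hat h_0$ and $\hat\pi_0$ with the local boundedness of $h_0$ on $I$ (which holds because $h_0 \le f$ and $f$, being c\`adl\`ag, is locally bounded).

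I do not anticipate a serious obstacle. The only non-routine ingredient is the essential-infimum inequality used in the first step, and even that is standard. If anything, the monotone case is structurally cleaner than the symmetric case, because $h_0$ and $\hat h_0$ are automatically non-increasing on $\bbR_+$ and so a canonical choice $I = [0,T]$ is available, removing the need for any symmetrization of the bounded interval.
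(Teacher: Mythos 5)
Your proposal is correct and follows essentially the same route as the paper: the paper's proof also rests on the inequality $|\essinf_{y<x} \hat f(y) - \essinf_{y<x} f(y)| \le \esssup_{y<x}|\hat f(y)-\hat f(y)\text{ vs. }f(y)|$ to get local uniform consistency of $\hat h_0$, and then simply invokes the truncation argument of the symmetric case for $\hat\pi_0$. The only difference is that you spell out the details the paper leaves implicit (the verification of the essential-infimum inequality, the sandwich for $\hat\pi_0$, and the quotient decomposition for $\hat g_0$), all of which are sound.
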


\begin{proof}
From the definitions,
\begin{align*}
h_0(x) = \essinf\{f(y) : y < x\},
&& \hat h_0(x) := \essinf\{\hat f(y) : y < x\},
\end{align*}
so that
\begin{align*}
|h_0(x) - \hat h_0(x)|
\le \esssup\{|f(y) - \hat f(y)| : y < x\},
\end{align*}
further implying that
\begin{align*}
\esssup_{x < a} |h_0(x) - \hat h_0(x)|
\le \esssup_{y < a} |f(y) - \hat f(y)|, \quad \text{for any $a > 0$},
\end{align*}
from which we get that $\hat h_0$ is locally uniformly consistent for $h_0$ whenever $\hat f$ is locally uniformly consistent for $f$.

For the remaining of the proof, we can follow in the footsteps of the proof of \thmref{symmetric_consistency} based on the fact that, for any $a > 0$,
\[
\left|\int_{[0,a]} h_0 - \int_{[0,a]} \hat h_0\right|
\le \int_{[0,a]} |h_0 - \hat h_0|
\le a \esssup_{[0,a]} |h_0 - \hat h_0|
\to 0,
\]
where the limit is in expectation as the sample size increases.
\end{proof}

\paragraph{Confidence interval and confidence band}

To go beyond point estimators, we suppose that we have available a confidence band for $f$ and deduce from that a confidence interval for $\pi_0$ and a confidence band for $g_0$. 

\begin{thm}
Suppose that we have a confidence band for $f$ as in \eqref{conf1}. Then \eqref{conf2} holds with probability at least $1-\alpha$, where  
\begin{align*}
\hat\pi_l := \int \hat h_l,
&& \hat\pi_u := \int \hat h_u, 
&& \hat h_l(x) := \essinf \{ \hat{f}_l(y) : y < x\},
&& \hat h_u(x) := \essinf \{ \hat{f}_u(y) : y < x\}.
\end{align*}
\end{thm}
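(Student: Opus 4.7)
The plan is to reduce everything to a deterministic inequality by conditioning on the event $\Omega = \{\hat f_l(x) \le f(x) \le \hat f_u(x) \text{ for a.e.\ } x\}$, which by hypothesis satisfies $\P(\Omega) \ge 1-\alpha$. This mirrors exactly the structure of the proof of the analogous confidence-band statement in the symmetric case, so the entire probabilistic content of the theorem is already packaged in \eqref{conf1} and the rest of the argument proceeds pointwise on $\Omega$.

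The key deterministic step is a monotonicity lemma for the essential infimum: if $\phi \le \psi$ almost everywhere on $[0, \infty)$, then $\essinf\{\phi(y) : y < x\} \le \essinf\{\psi(y) : y < x\}$ for every $x > 0$. This is immediate from the definition, since modifying a function on a null set does not change its essential infimum and the essential infimum is monotone in the underlying function. Applying this lemma twice on $\Omega$ --- once to the pair $(\hat f_l, f)$ and once to the pair $(f, \hat f_u)$ --- yields the pointwise sandwich $\hat h_l(x) \le h_0(x) \le \hat h_u(x)$ for every $x > 0$, which is the direct analogue of \eqref{symmetric_minimum_bound}. Integrating this sandwich over $[0, \infty)$ gives $\hat\pi_l \le \pi_0 \le \hat\pi_u$, and then the density band follows by setting $\hat g_l := \hat h_l / \hat\pi_u$ and $\hat g_u := \hat h_u / \hat\pi_l$, using $g_0 = h_0/\pi_0$ when $\pi_0 > 0$.

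I do not expect any real obstacle here. The one point worth a moment of care is the distinction between ``for a.e.\ $x$'' (the form in which \eqref{conf1} is stated) and ``for every $x$'' (the form needed for the $\essinf$ comparison at a fixed point), which is precisely why the argument must pass through essential infima rather than pointwise infima. A minor convention is also needed to make $\hat g_l$ and $\hat g_u$ well-defined when $\hat\pi_l = 0$, analogous to the convention introduced in \secref{monotone} for $\hat g_0$.
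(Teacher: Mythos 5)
Your proposal is correct and is exactly the ``straightforward'' argument the paper has in mind (the paper omits this proof, deferring implicitly to the symmetric-case argument, which you mirror faithfully): condition on the event in \eqref{conf1}, use monotonicity of the essential infimum under a.e.\ domination to get the pointwise sandwich $\hat h_l \le h_0 \le \hat h_u$, then integrate and divide. Your attention to the a.e.\ versus pointwise distinction and to the definition $\hat g_l = \hat h_l/\hat\pi_u$, $\hat g_u = \hat h_u/\hat\pi_l$ is exactly right; nothing is missing.
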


The proof is straightforward and thus omitted.

\begin{rem}
So far, we have assumed that the monotone density is supported on $[0, \infty)$, but in principle we can also consider the starting point as unspecified. If this is the case, similar to what we did for the case of a symmetric component in \tabref{symmetric_algorithm}, we can again consider several candidate locations defining the monotone component's support, and select the one yielding the largest monotone component weight.
\end{rem}

\subsection{Numerical experiments}

We are here dealing with densities supported on $[0, \infty)$, and what happens near the origin is completely crucial as is transparent from the definition of $\hat h_0$. It is thus important in practice to choose an estimator for $f$ that behaves well in the vicinity of the origin. As it is well known that kernel density estimators have a substantial bias near the origin, we opted for a different estimator. Many density estimation methods have been proposed to deal with boundary effects, including smoothing splines \citep{gu1993smoothingtheory, gu1993smoothingalgorithm}, local density estimation approaches \citep{fan1993local, loader1996local, hjort1996locally, park2002local}, and local polynomial approximation methods  \citep{cattaneo2020simple, cattaneo2019lpdensity}. For consideration of simplicity and intuitiveness, we consider kernel density estimation using a reflection about the boundary point \citep{schuster1985incorporating, cline1991kernel, karunamuni2005generalized}. We acquire $\hat{f}$ from \citep{schuster1985incorporating} and, as we did before, we acquire a $95\%$ confidence band $[\hat{f}_l, \hat{f}_u]$ from \citep{cheng2019nonparametric}. We also note that $\hat{f}$ is consistent for $f$ as shown by \cite{schuster1985incorporating}.

We consider two different situations as listed in \tabref{monotone_simulation_situations}. Each situation's corresponding $\theta$, $\theta_0$, and $\pi_0$, defined as in \eqref{nonincreasing_pi0_definition}, are also presented. We again generate a sample of size $n = 1000$ from each model, and repeat each setting $1000$ times. The mean values as well as standard deviations of our method and related methods are reported in \tabref{nonincreasing_simulation_numbers}. 

In the situation \textsf{M1}, our estimator achieves a smaller estimation error for $\pi_0$ than all other methods for their corresponding target, either $\theta$ or $\theta_0$, even with much fewer information on the background component. In situation \textsf{M2}, our method has a slightly higher error than the error of $\hat{\theta}_0^{\rm{PSH}}$, $\hat{\theta}_0^{\rm{E}}$, both having complete information on the background component, but lower than that of $\hat{\theta}_0^{\rm{PSC}}$ and $\hat{\theta}_0^{\rm{CJ}}$.

\begin{table}[htpb]
\centering\small
\caption{Simulated situations for the estimation of a monotone background component, together with the corresponding values of $\theta$, $\theta_0$, $\pi_0$, obtained numerically (and rounded at 3 decimals).}
\label{tab:monotone_simulation_situations}
\bigskip
\setlength{\tabcolsep}{0.05in}
\begin{tabular}{p{0.1\textwidth} 
p{0.3\textwidth} p{0.1\textwidth} p{0.1\textwidth} p{0.1\textwidth} 
}
\toprule
{\bf Model} & {\bf Distribution} & {\bf $\theta $} & {\bf $\theta_0 $}& {\bf $\pi_0$}\\ 
\midrule
\textsf{M1} & $0.85 \text{ }\mathcal{E} (1) + 0.15 \text{ } \mathcal{G} (50, 1/10) $ & 0.850 & 0.850 & 0.922  \\ 
\midrule
\textsf{M2} & $0.95 \text{ }\mathcal{E} (1) + 0.05 \text{ } \mathcal{G} (50, 1/10) $ & 0.950 & 0.950 & 0.993 \\

\bottomrule
\end{tabular}
\end{table}

\begin{table}[htpb]
\centering\small
\caption{A comparison of various methods for estimating a monotone background component in the situations of \tabref{monotone_simulation_situations}. As always, $\hat\pi_0$ is compared with $\pi_0$, the $\hat{\theta}_0^{\rm X}$ are compared with $\theta_0$, while the $\hat{\theta}^X$ are compared with $\theta$.}
\label{tab:nonincreasing_simulation_numbers}
\bigskip
\setlength{\tabcolsep}{0.025in}
\begin{tabular}{p{0.1\textwidth} 
p{0.075\textwidth} p{0.075\textwidth} p{0.075\textwidth} 
p{0.1\textwidth}
p{0.075\textwidth} p{0.075\textwidth} p{0.075\textwidth} p{0.075\textwidth}
p{0.075\textwidth}
p{0.075\textwidth}
}
\toprule
{\bf Model} & {\bf $\hat{\pi}_0$} & {\bf $\hat{\pi}_0^{\rm{L}}$} & {\bf $\hat{\pi}_0^{\rm{U}}$} & {\bf Null} & {\bf $\hat{\theta}_0^{\rm{PSC}}$} & {\bf $\hat{\theta}_0^{\rm{PSH}}$} & {\bf $\hat{\theta}_0^{\rm{PSB}}$} & {\bf $\hat{\theta}^{\rm{E}}$} 
& {\bf $\hat{\theta}^{\rm{MR}}$}
& {\bf $\hat{\theta}^{\rm{CJ}}$}\\ \midrule
\textsf{M1}   & 0.920 & 0.460 & 1 & $\mathcal{Exp}(1)$ & 0.843  & 0.854 & 0.889 & 0.841 & 0.866 & 0.533 \\ 
\text{   }   & (0.020) & (0.053) & (0) & & (0.022) & (0.022) & (0.018) & (0.028) & (0.013) & (0.085)\\ 
\midrule
\textsf{M2}   & 0.984 & 0.519 & 1 & $\mathcal{Exp}(1)$ & 0.936 & 0.953 & 0.984 & 0.954 & 0.964 & 0.842\\ 
\text{   }   & (0.012) & (0.061)  & (0)  & & (0.022) & (0.020) & (0.014) & (0.028) & (0.009) & (0.083)\\

\bottomrule
\end{tabular}
\end{table}

\subsection{Real data analysis}
\label{sec:monotone_real_data}

In this subsection we consider a real dataset where the background component could be assumed to be monotonic nonincreasing. We look at the Coronavirus dataset \citep{coronavirus}, acquired from the \textit{COVID-19 Data Repository by the Center for Systems Science and Engineering (CSSE)} at Johns Hopkins University\footnote{~\url{https://github.com/CSSEGISandData/COVID-19}}. It is well known that new coronavirus cases are consistently decreasing in the USA currently, and this trend could be seen to begin on Jan 8, 2021 as shown by the New York Times interactive data\footnote{~\url{https://www.nytimes.com/interactive/2021/us/covid-cases.html}}. For each person infected on or after Jan 8, 2021, we count the number of days between that day and the time they were infected. We are interested in quantifying how monotonic that downward trend in coronavirus infections is. 
As we do not know the actual background distribution here, and Gaussian distributions are of not particular relevance, we find that none of the other comparison methods in \secref{symmetric_experiments} are applicable, and therefore only provide our method's estimate in \tabref{monotone_coronavirus_table} and \figref{monotone_coronavirus_picture}. Numerically, it can be seen that the background monotonic component accounts for around $96.7\%$ of the new cases arising on or after Jan 8, 2021.

\begin{table}[htpb]
\centering\small
\caption{Coronavirus dataset where it is of interest to gauge how monotonic the trend is starting in Jan 8, 2021.}
\label{tab:monotone_coronavirus_table}
\bigskip
\setlength{\tabcolsep}{0.1in}

\begin{tabular}{p{0.125\textwidth} 
p{0.1\textwidth} p{0.1\textwidth} p{0.1\textwidth} 
}
\toprule
{\bf Model} & {\bf $\hat{\pi}_0$} & {\bf $\hat{\pi}_0^{\rm{L}}$} & {\bf $\hat{\pi}_0^{\rm{U}}$} \\ \midrule
\textsf{Coronavirus}   & 0.967 & 0.955 & 0.968  \\ 

\bottomrule
\end{tabular}
\end{table}

\begin{figure}[htpb]
    \centering
    \includegraphics[width=0.3\textwidth]{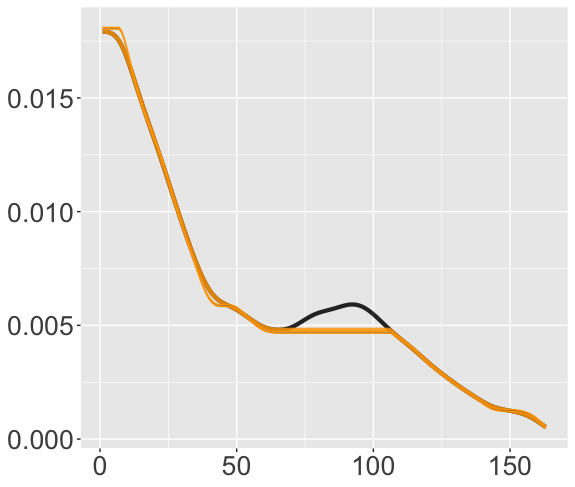}
    \caption{Estimated background monotone component on the Coronavirus dataset: the black curve represents the fitted density; the center orange curve represents the computed $\hat{h}_0$; and the top and bottom orange curves represent the 95\% simultaneous confidence bands for $h_0$. (Due to the large amount of data here the confidence bands are very narrow.)}
    \label{fig:monotone_coronavirus_picture}
\end{figure}

\section{Log-concave background component}
\label{sec:log-concave}

Our last emblematic setting is that of extracting a log-concave background component from a density. Log-concave densities have been widely studied in the literature \citep{walther2009inference, samworth2018recent}, and include a wide variety of distributions, including all Gaussian densities, all Laplace densities, all exponential densities, and all uniform densities. 
They have been extensively used in mixture models \citep{chang2007clustering, hu2016maximum, pu2020algorithm}.

Following \cite{patra}, for a density $f$ we define
\begin{equation}
\label{logconcave_pi0_definition}
\pi_0 := \sup \big\{\pi: \exists g \in \cC \text{ s.t. } f - \pi g \geq 0 \text{ a.e.}\big\},
\end{equation}
where $\cC$ is the class of log-concave densities.
Note that $\pi_0 \in [0,1]$, with $\pi_0 = 1$ if and only if $f$ itself is log-concave.

\begin{thm}
$\pi_0$ is the value of following optimization problem:
\begin{equation}
\begin{aligned}
\label{logconcave_original_maximization}
\text{maximize}& \quad \int_{-\infty}^{\infty} h(x) dx \\
\text{over}& \quad \big\{ h: \mathbb{R} \xrightarrow{} \mathbb{R}_+,\ \text{log-concave},\ h \leq f\big\}.
\end{aligned}
\end{equation}
Indeed, this problem admits a solution, although it may not be unique.
\end{thm}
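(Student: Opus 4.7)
The plan is to split the proof into two: first show that the value of \eqref{logconcave_original_maximization} equals $\pi_0$, and then show that the supremum is attained.

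For the first, I would observe that feasibility in the two problems is related by simple normalization. Given $h$ feasible in \eqref{logconcave_original_maximization} with $\int h = \pi > 0$, the function $g := h/\pi$ is a log-concave probability density with $\pi g \le f$ a.e., so $\pi$ is admissible in the definition \eqref{logconcave_pi0_definition}. Conversely, if $\pi$ is admissible with witness $g \in \cC$, then $h := \pi g$ is nonnegative, log-concave, dominated by $f$, with $\int h = \pi$, hence feasible in \eqref{logconcave_original_maximization}. Taking suprema gives equality, and in particular the $h \equiv 0$ case covers $\pi_0 = 0$.

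For the second, I would assume $\pi_0 > 0$ (otherwise $h \equiv 0$ attains the supremum), take a maximizing sequence $h_n$ with $\int h_n \to \pi_0$, and assume (on a tail) $\int h_n \ge \pi_0/2$. The normalized densities $\tilde h_n := h_n / \int h_n$ are log-concave probability densities dominated by $2f/\pi_0$; this gives tightness, since for any $\eps > 0$ one can pick $M$ with $\int_{|x|>M} f < \eps\, \pi_0/2$. I would then invoke a Helly-type compactness result for log-concave densities to extract a subsequence $\tilde h_{n_k}$ converging pointwise a.e.\ to a log-concave density $g^*$. Setting $h^* := \pi_0\, g^*$, log-concavity and $\int h^* = \pi_0$ are immediate, and $h^* \le f$ a.e.\ follows by passing to the pointwise a.e.\ limit in $h_{n_k} \le f$.

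The main obstacle is implementing the compactness step. I would work with the concave logarithms $\ell_n := \log \tilde h_n$ (with value $-\infty$ off the support): tightness together with the pointwise upper bound $\tilde h_n \le 2f/\pi_0$ yields uniform two-sided bounds on $\ell_n$ on compact subintervals of the effective support of the limit, and concavity then gives uniform Lipschitz control on sub-compacts. Helly selection combined with a diagonal extraction across an exhausting sequence of compacts delivers a pointwise a.e.\ concave limit $\ell^*$; Fatou plus tightness upgrades a.e.\ convergence to $L^1$ convergence, so that $g^* := e^{\ell^*}$ integrates to $1$ and is a bona fide log-concave density. The non-uniqueness assertion I would simply remark on, illustrated e.g.\ by a density $f$ with a flat region admitting several distinct log-concave minorants sharing the same integral.
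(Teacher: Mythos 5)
Your existence argument is correct and is essentially the paper's own proof in lightly different clothing: both reduce to the ``maximize $\int h$'' formulation by normalization, take a maximizing sequence, use tightness inherited from $h\le f$ together with Helly/Prokhorov selection to get a candidate limit dominated by $f$, and then identify that limit as log-concave by controlling the concave logarithms on compact subintervals of the limit support (two-sided bounds plus concavity giving uniform slope control, then Arzel\`a--Ascoli and a diagonal extraction). The paper works with the unnormalized $h_k$ and their cumulative functions $H_k$ rather than with $\tilde h_n = h_n/\int h_n$, but the mechanism is identical, and your appeal to a Helly-type compactness theorem for log-concave densities is legitimate. One implementation caution: your uniform upper bound on $\ell_n$ via $\tilde h_n \le 2f/\pi_0$ requires $f$ to be locally bounded, which a generic density need not be; the paper avoids this by first deriving the slope bound from the integral constraints alone and then pinning $\log h_k$ at a single point $x_0$ with $f(x_0)<\infty$. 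The lower bound on a sub-compact likewise needs a short mean-value argument showing positive limiting mass on flanking intervals.

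The one genuine weak spot is the non-uniqueness example. As stated, ``a density $f$ with a flat region'' does not produce multiple maximizers: for $f=\mathds{1}_{[0,1]}$ the unique solution is $f$ itself, since $f$ is log-concave and any feasible $h$ has $\int h\le 1$ with equality only if $h=f$ a.e. The phenomenon you need --- and the one the paper uses --- is a mixture $f=\frac1m f_1+\cdots+\frac1m f_m$ of log-concave densities with pairwise disjoint supports. Because the support of a log-concave function is an interval and $h\le f$ forces $h=0$ on the gaps, any feasible $h$ is supported inside a single component, whence $\int h\le \frac1m$ with equality exactly for $h=\frac1m f_j$; this yields $m$ distinct maximizers. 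You should replace your example with one of this form (or, if by ``flat region'' you meant an interval on which $f$ vanishes between two log-concave bumps, make that explicit and supply the interval-support argument).
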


\begin{proof}
From the definition in \eqref{logconcave_pi0_definition}, it is clear that
\begin{equation}
\pi_0 = \sup \big\{\textstyle{\int h}: \text{$h$ is log-concave and $0 \le h \le f$ a.e.}\big\}.
\end{equation}
Thus we only need to show that the problem \eqref{logconcave_original_maximization} admits a solution --- and then show that it may not be unique to complete the proof.
Here the arguments are a bit more involved.

Let $(h_k)$ be a solution sequence to the problem \eqref{logconcave_original_maximization}, meaning that $h_k : \bbR \to \bbR_+$ is log-concave and satisfies $h_k \le f$, and that $q_k := \int h_k$ converges, as $k \to \infty$, to the value of the optimization problem \eqref{logconcave_original_maximization}, denoted $q_*$ henceforth. Note that $0 \le q_k \le 1$ since $0 \le h_k \le f$ and $\int f = 1$, implying that $0 \le q_* \le 1$.  We only need to consider the case where $q_* > 0$, for when $q_* = 0$ the constant function $h \equiv 0$ is a solution. Without loss of generality, we assume that $q_k > 0$ for all $k$. 

Each $h_k$ is log-concave, and from this we know the following: its support is an interval, which we denote $[a_k, b_k]$ with $-\infty \le a_k < b_k \le \infty$; $h_k$ is continuous and strictly positive on $(a_k, b_k)$; and $x \mapsto (\log h_k(y) - \log h_k(x))/(y-x)$ is non-increasing in $x \le y$ and $y \mapsto (\log h_k(y) - \log h_k(x))/(y-x)$ is non-increasing in $y \ge x$. 
Extracting a subsequence if needed, we assume without loss of generality that $a_k \to a$ and $b_k \to b$ as $k \to \infty$, for some $-\infty \le a \le b \le \infty$.

Define $F(t) = \int_{-\infty}^t f(x) dx$ and $H_k(t) = \int_{-\infty}^t h_k(x) dx$.
We have that $H_k$ is a non-decreasing, and if $s \le t$, $H_k(t) - H_k(s) \le F(t) - F(s)$, because $h_k \le f$.
In particular, by Helly's selection theorem (equivalent to Prokhorov's theorem when dealing with distribution functions), we can extract a subsequence that converges pointwise. Without loss of generality, we assume that $(H_k)$ itself converges, and let $H$ denote its limit.
Note that $H$ is constant outside of $[a,b]$.
In fact, $H(x) = 0$ when $x < a$, because $x < a_k$ eventually, forcing $H_k(x) = 0$. For $x, x' > b$, we have that $x, x' > b_k$ for large enough $k$, implying that $H_k(x) = H_k(x')$, yielding $H(x) = H(x')$ by taking the limit $k \to \infty$.
Note also that, for all $s \le t$, 
\[
H(t) - H(s)
= \lim_{k \to \infty} (H_k(t) - H_k(s))
\le F(t) - F(s).
\]
This implies that $H$ is absolutely continuous with derivative, denoted $h$, satisfying $h \le f$ a.e..  

We claim that $h$ is a solution to \eqref{logconcave_original_maximization}. 
We already know that $0 \le h \le f$ a.e.. In addition, we also have $\int_{-\infty}^\infty h \ge q_*$.
To see this, fix $\eps > 0$, and let $t$ be large enough that $\int_t^\infty f \le \eps$.
Because $h_k \le f$, we have $H_k(t) = q_k - \int_t^\infty h_k \ge q_k - \eps$, implying that $H(t) \ge q_* -\eps$ by taking the limit as $k \to\infty$. Hence, $\int_{-\infty}^\infty h = H(\infty) \ge H(t) \ge q^* -\eps$, and $\eps > 0$ being otherwise arbitrary, we deduce that $\int_{-\infty}^\infty h \ge q_*$.
It thus remains to show that $h$ is log-concave. 

We establish this claim by proving that, extracting a subsequence if needed, $h_k$ converges to $h$ a.e., and it is enough to do so in an interval. Thus let $x_0$ and $\Delta > 0$ be such that $[x_0-4\Delta, x_0+4\Delta] \subset (a, b)$. Note that $H$ is strictly increasing on $(a, b)$, because each $H_k$ is strictly increasing on $(a_k, b_k)$ due to $h_k$ being log-concave.
Take any $x_0-\Delta \le x < y \le x_0+\Delta$ and let $\delta_k = (\log h_k(y) - \log h_k(x))/(y-x)$. 
We assume, for example, that $\delta_k \ge 0$, and bound it from above.
Let $z_k \in [x_0-4\Delta, x_0-3\Delta]$ be such that $H_k(x_0-3\Delta) - H_k(x_0-4\Delta) = h_k(z_k) \Delta$, which exists by the mean-value theorem. Note that $h_k(z_k) \Delta \to \Delta_1 := H(x_0-3\Delta) - H(x_0-4\Delta)$, so that $h_k(z_k) \ge \Delta_2 := \Delta_1/2\Delta > 0$, eventually.
Now, for any $z$ in $[x_0-2\Delta, x_0-\Delta]$, due to $z_k < z < x < y$ and $\log h_k$ being concave,
\[
\frac{\log h_k(z) - \log h_k(z_k)}{z-z_k} \ge \frac{\log h_k(y) - \log h_k(x)}{y-x} = \delta_k,
\] 
which implies
\[
h_k(z) 
\ge h_k(z_k) \exp(\delta_k (z-z_k))
\ge \Delta_2  \exp(\delta_k \Delta).
\]
This being true for all such $z$, we have
\[
1
\ge \int_{-\infty}^\infty f(z) \d z
\ge \int_{-\infty}^\infty h_k(z) \d z 
\ge \int_{x_0-2\Delta}^{x_0-\Delta} h_k(z) \d z
\ge \Delta \cdot \Delta_2  \exp(\delta_k \Delta),
\]
allowing us to derive $\delta_k \le M_1 := \Delta^{-1} \log(2/\Delta_1)$.
We can deal with the case where $\delta_k < 0$ by symmetry, obtaining that, for all $k$ sufficiently large and all $x_0-\Delta \le x < y \le x_0+\Delta$, 
\[
\left|\frac{\log h_k(y) - \log h_k(x)}{y-x}\right| \le M_1.
\]
Let $u_k = h_k(x_0)$. Because $h_k$ is unimodal, either $h_k(x) \le u_k$ for all $x \le x_0$ or $h_k(x) \le u_k$ for all $x \ge x_0$. Extracting a subsequence if needed, and by symmetry, assume that the former is true for all $k$ large enough. Then 
\begin{align*}
\Delta_1
&= H(x_0-3\Delta) - H(x_0-4\Delta) \\
&= \lim_{k \to \infty} (H_k(x_0-3\Delta) - H_k(x_0-4\Delta)) \\
&= \lim_{k \to \infty} \int_{x_0-3\Delta}^{x_0-4\Delta} h_k(x) \d x \\
&\le \liminf_{k \to \infty} \Delta u_k, 
\end{align*}
so that $u_k \ge \Delta_2 > 0$, eventually.
Assuming so, we have $u_k h_k(x_0) \ge \Delta_1/\Delta \ge \Delta_2$.
And we also have $h_k(x_0) \le f(x_0)$, and together, $|\log h_k(x_0)| \le M_2 := |\log (\Delta_2)| \vee |\log f(x_0)|$.
With the triangle inequality, we thus have, for all $x \in [x_0-\Delta, x_0+\Delta]$,
\[
|\log h_k(x)| \le M_1 |x-x_0| + |\log h_k(x_0)| \le M_1 \Delta + M_2 =: M_3.
\]
The family of functions $(\log h_k)$ (starting at $k$ large enough) is thus uniformly bounded and equicontinuous on $[x_0-\Delta, x_0+\Delta]$, so that by the Arzel\`a--Ascoli theorem, we have that $(\log h_k)$ is precompact for the uniform convergence on that interval. Therefore, the same is true for $(h_k)$. Let $h_\infty$ be the uniform limit of a subsequence, and note that $h_\infty$ is continuous. 
$h_\infty$ must also be a weak limit as well, since uniform convergence on a compact interval implies weak convergence on that interval.
Therefore $h_\infty = h$ a.e.~on that interval, since $(h_k)$ converges weakly to $h$.
Hence, all the uniform limits of $(h_k)$ must coincide with $h$ a.e., and since any such limit must be continuous, it means that they are the same. We conclude that, on the interval under consideration, $h$ is equal a.e.~to a continuous function which is the (only) uniform limit of $(h_k)$, and in particular, $(h_k)$ converges pointwise a.e.~to $h$ on that interval. 

Thus, we have proved that the optimization problem \eqref{logconcave_original_maximization} has at least one solution. We now show that there may be multiple solutions. This is the case, for instance, when $f = \frac1m f_1 + \cdots + \frac1m f_m$ where each $f_j$ is a log-concave density and these densities have support sets that are pairwise disjoint. In that case, any of the components, meaning any $f_j$, is a solution to \eqref{logconcave_original_maximization}, and these are the only solutions. This comes from the fact that the support of a log-concave distribution is necessarily an interval.
\end{proof}

We have thus established that a density has at least one log-concave background component, and possibly multiple ones, corresponding to the solutions to \eqref{logconcave_original_maximization}. If $h$ is one such solution, then $\pi_0 = \int h$ and we may define the corresponding density as $g = h/\pi_0$ if $\pi_0 > 0$.
Then $f$ can be expressed as a mixture of the background density $g$ and another, unspecified, density $u$, as follows
\begin{equation}
f = \pi_0 g + (1-\pi_0) u.
\end{equation}
(We do not use the notation $h_0$ and $g_0$ here, since these may not be uniquely defined.)
The procedure is summarized in \tabref{logconcave_algorithm}. 
An illustration of this decomposition is shown in \figref{illustration_logconcave}. Note that the density $u$ may have a non-trivial log-concave background component. This is the case, for example, if $f$ is the (nontrivial) mixture of two log-concave densities with disjoint support sets, in which case $u$ is one of these log-concave densities.

\begin{table}[htpb]
\centering
\caption{Log-concave background computation. (The input $d$ is used to initialize the function $v$ --- discretized as $\bv$ below --- that is bounded from above by $\log f$. In our experiments, we chose $d = 0.02$.) }
\label{tab:logconcave_algorithm}
\bigskip
\setlength{\tabcolsep}{0.22in}
\begin{tabular}{ p{0.9\textwidth}  }
\toprule
{\textbf{inputs}: equally spaced gridpoints $\mathbf{t} = \{t_1,t_2,\dots, t_k\}$, density $f$, a boolean $R$ indicating whether to use the Riemann integral approximation, initialization amount $d$} \\ \midrule

initialize $\mathbf{w} = (0, 0, \dots, 0)$ with length $k$

\textbf{for} $i=1,\dots,k$ \textbf{do}

\hspace{3mm} $\mathbf{w}[i] = \log(f(t_i))$

compute $\mathbf{A}$ and $\mathbf{b}$ as in \eqref{pointwise_approximation_a_b}

initialize $\mathbf{v} = \mathbf{w} - (d, d, \dots, d)$ 

\textbf{if} $R$ = \textsf{true} \textbf{then}

\hspace{3mm} do optimization \eqref{riemann_linear_approximation} using SQP, and record the optimizer $\mathbf{v}$ and the maximum as $\pi_0$

\textbf{else}

\hspace{3mm} do optimization \eqref{actual_linear_approximation} using SQP, and record the optimizer $\mathbf{v}$ and the maximum as $\pi_0$ \\

\midrule
\textbf{return} $\mathbf{v}, \pi_0$
\\

\bottomrule
\end{tabular}
\end{table}

\begin{figure}[htpb]
    \centering
    \includegraphics[width=0.3\textwidth]{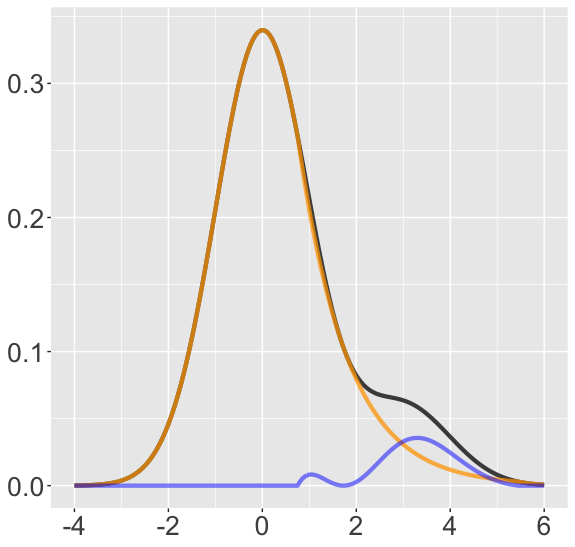}
    \caption{The density $f$ of the Gaussian mixture $0.85 \text{ } \mathcal{N}(0, 1) + 0.15 \text{ } \mathcal{N} (3, 1)$, in black, and its decomposition into $\pi_0 g$, in orange, and $(1 - \pi_0) u$, in blue. The orange curve is acquired by maximizing \eqref{not_apx_area}, but maximizing the Riemann integral approximation \eqref{riemann_linear_approximation} gives almost the same result (with the maximum difference between the two curves being $2 \times 10^{-14} $). Notice $\pi_0 = 0.931$ as $\pi_0 g$ also takes a non-negligible weight from the smaller component $0.15 \text{ } \mathcal{N} (0, 1)$.  }
    \label{fig:illustration_logconcave}
\end{figure}

\subsection{Estimation and consistency}

In practice, based on a sample, we estimate $f$, resulting in $\hat f$, and simply obtain estimates by plug-in as we did before, this time via
\begin{equation}
\begin{aligned}
\label{logconcave_fitted_maximization}
\text{maximize}& \quad \int_{-\infty}^{\infty} h(x) dx \\
\text{over}& \quad \big\{ h: \mathbb{R} \xrightarrow{} \mathbb{R}_+,\ \text{log-concave},\ h \leq \hat f\big\}.
\end{aligned}
\end{equation}

At least formally, let $\hat h$ be a solution to \eqref{logconcave_fitted_maximization}. 
We then define
\begin{align}
\hat{\pi}_0 = \int_{-\infty}^{\infty} \hat h(x) dx,
&&\hat{g}(x) = \frac{\hat h(x)}{\hat\pi_0}.
\end{align}

Here, to avoid technicalities, we work under the assumption that the density estimator $\hat f = \hat f_n$ satisfies
\begin{equation}
\label{logconcave_consistency_condition}
\E\left[\esssup_{|x| \le a} \frac{\max\{\hat f_n(x), f(x)\}}{\min\{\hat f_n(x), f(x)\}}\right] \to 1, \quad n \to \infty, \quad \forall a > 0.
\end{equation}
This condition is better suited for when the density $f$ approaches 0 only at infinity.
Also for the sake of simplicity, we only establish consistency for $\hat\pi_0$.

\begin{thm}
When \eqref{logconcave_consistency_condition} holds, $\hat\pi_0$ is a consistent.
\end{thm}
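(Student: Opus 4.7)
The plan is to follow the same sandwiching strategy as in \thmref{symmetric_consistency}, but with the crucial twist that the control in \eqref{logconcave_consistency_condition} is \emph{multiplicative} rather than additive. This is exactly what is needed because log-concavity is preserved under positive scalar multiplication but not under additive perturbation. Let
\[
R_n(a) := \esssup_{|x| \le a} \frac{\max\{\hat f_n(x), f(x)\}}{\min\{\hat f_n(x), f(x)\}},
\]
so that \eqref{logconcave_consistency_condition} asserts $R_n(a) \to 1$ in $L^1$, and hence in probability, for every $a > 0$.

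The heart of the argument is a pair of constructions that carry feasible log-concave minorants of $\hat f$ to log-concave minorants of $f$ (and vice versa), up to a factor $R_n(a)^{-1}$. Given any $\hat h$ feasible for the empirical problem \eqref{logconcave_fitted_maximization}, I would set $h(x) := R_n(a)^{-1} \hat h(x) \mathbf{1}_{[-a,a]}(x)$. This $h$ is log-concave (scaling by a positive constant preserves log-concavity, and truncation to a sub-interval, extended by zero, does too: any log-concavity inequality involving a point outside the truncation interval is trivially satisfied because one factor vanishes), and $h \le f$ a.e.\ by the definition of $R_n(a)$. Feasibility for the population problem \eqref{logconcave_original_maximization} then gives $\int_{-a}^{a} \hat h \le R_n(a) \pi_0$, hence
\[
\hat\pi_0 \;\le\; R_n(a)\, \pi_0 + \int_{|x|>a} \hat f.
\]
The mirror construction, starting from an optimal log-concave minorant $h_\star$ of $f$ (which exists by the preceding theorem) and applying the same truncate-and-scale operation, yields a feasible point for the empirical problem and therefore
\[
\hat\pi_0 \;\ge\; R_n(a)^{-1}\Bigl(\pi_0 - \int_{|x|>a} f\Bigr).
\]

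To wrap up, I would fix $\epsilon > 0$ and choose $a$ large enough that $\int_{|x|>a} f < \epsilon$. The tail of $\hat f$ is controlled by the same ratio bound,
\[
\int_{|x|>a} \hat f \;=\; 1 - \int_{|x|\le a} \hat f \;\le\; 1 - R_n(a)^{-1}(1 - \epsilon),
\]
which is $\epsilon + o_{\mathrm{P}}(1)$ since $R_n(a) \to 1$ in probability. Combining both sandwich inequalities, sending $n \to \infty$ first and then $\epsilon \downarrow 0$, delivers $\hat\pi_0 \to \pi_0$ in probability. The \emph{main obstacle} is the tail: hypothesis \eqref{logconcave_consistency_condition} only gives ratio control on bounded intervals, so the mass outside $[-a,a]$ must be handled separately; this is where the density constraints $\int f = \int \hat f = 1$ are indispensable, and it is what forces the two-stage limiting argument on $n$ and $\epsilon$. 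The log-concavity bookkeeping (truncation plus positive scaling) is routine, and the proof never needs pointwise convergence of $\hat h$ to a specific $h_\star$ — only the values of the integrals enter the sandwich.
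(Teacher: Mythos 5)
Your proof is correct and follows essentially the same route as the paper's: truncate to $[-a,a]$, rescale by a factor tending to one so that the truncated minorant becomes feasible for the other optimization problem, and sandwich $\hat\pi_0$ between $\pi_0$ plus/minus tail and ratio errors before sending $n\to\infty$ and then the tail parameter to zero. The only (cosmetic) difference is that you scale by the random factor $R_n(a)^{-1}$ directly, whereas the paper works on the high-probability event $\{R_n(a)\le 1+\eta\}$ with a deterministic factor $(1-\eta)$; you also usefully spell out the control of $\int_{|x|>a}\hat f$ and the reverse direction, which the paper leaves implicit.
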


\begin{proof}
Fix $\eta > 0$ and $a > 0$, and consider the event, denoted $\Omega$, that 
\begin{equation}
\esssup_{|x| \le a} \frac{\max\{\hat f(x), f(x)\}}{\min\{\hat f(x), f(x)\}} \le 1+\eta.
\end{equation}
Because of \eqref{logconcave_consistency_condition}, $\Omega$ happens with probability tending to~1 as the sample size increases.
Let $\eps = 1 - \int_{[-a,a]} f$, which is small when $a$ is large.

Assume that $\Omega$ holds. Let $h$ be a solution to \eqref{logconcave_original_maximization}, so that $\pi_0 = \int h$. Then define $\tilde h(x) = (1-\eta) h(x) \IND{|x| \le a}$ and note that 
\[
\int \tilde h = (1-\eta) \int_{[-a,a]} h \ge (1-\eta)(\pi_0 - \eps),
\]
so that $\int \tilde h$ is close to $\pi_0$ when $\eta$ is small and $a$ is large.
We also note that $\tilde h$ is log-concave and satisfies $0 \le \tilde h \le (1-\eta) f$ a.e.. Under $\Omega$, we have $f \le (1+\eta) \hat f$, and so it is also the case that $\tilde h \le (1-\eta)(1+\eps)\hat f \le \hat f$, assuming as we do that $\eta$ and $\eps$ are small enough. Then $\int \tilde h \le \hat \pi_0$, by definition of $\hat \pi_0$. Gathering everything, we obtain that $(1-\eta)(\pi_0 - \eps) \le \hat \pi_0$. By letting $\eta \to 0$ and $a \to \infty$ so that $\eps \to 0$, we have established that $\liminf\hat\pi_0 \ge \pi_0$ in probability. (The $\liminf$ needs to be understood as the sample size increases.)

The reverse relation, meaning $\limsup \hat\pi_0 \le \pi_0$, can be derived similarly starting with a solution $\hat h$ to \eqref{logconcave_fitted_maximization}.
\end{proof}

\paragraph{Confidence interval}
Once again, if we have available a confidence band for $f$, we can deduce from that a confidence interval for $\pi_0$. 

\begin{thm}
Suppose that we have a confidence band for $f$ as in \eqref{conf1}. Then 
\begin{align}
\hat\pi_l \le \pi_0 \le \hat\pi_u
\end{align}
holds with probability at least $1-\alpha$, where $\hat\pi_l$ and $\hat\pi_u$ are the values of the optimization problem \eqref{logconcave_original_maximization} with $f$ replaced by $\hat f_l$ and $\hat f_u$, respectively. 
\end{thm}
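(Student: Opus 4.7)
The plan is to mimic the argument that made the analogous statements work in the symmetric and monotone settings: condition on the (probability $\geq 1-\alpha$) event $\Omega$ on which the confidence band is simultaneously valid, then use a simple feasibility/monotonicity argument to sandwich $\pi_0$ between $\hat\pi_l$ and $\hat\pi_u$. The fact that the optimization problem defining $\pi_0$ has the clean comparative structure ``maximize $\int h$ over $h$ log-concave with $h \le f$ a.e.'' makes this transparent because enlarging or shrinking the pointwise constraint only enlarges or shrinks the feasible set.

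Concretely, let $\Omega$ denote the event $\{\hat f_l(x) \le f(x) \le \hat f_u(x) \text{ for a.e.\ } x\}$, which by hypothesis satisfies $\P(\Omega) \ge 1-\alpha$. Assume $\Omega$ holds throughout. For the upper bound, take any $h$ feasible for \eqref{logconcave_original_maximization}, i.e., $h:\bbR\to\bbR_+$ log-concave with $h \le f$ a.e. Then $h \le f \le \hat f_u$ a.e., so $h$ is also feasible for the optimization problem with $\hat f_u$ in place of $f$; hence $\int h \le \hat\pi_u$. Taking the supremum over all such $h$ yields $\pi_0 \le \hat\pi_u$. Symmetrically, for the lower bound, any $h$ feasible for the problem with $\hat f_l$ satisfies $h \le \hat f_l \le f$ a.e., so it is feasible for the original problem, giving $\int h \le \pi_0$; taking the supremum yields $\hat\pi_l \le \pi_0$.

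Combining the two bounds on the event $\Omega$ gives $\hat\pi_l \le \pi_0 \le \hat\pi_u$, which holds with probability at least $1-\alpha$. Two small remarks are worth flagging. First, the argument does \emph{not} require the suprema defining $\hat\pi_l$, $\pi_0$, $\hat\pi_u$ to be attained, and in particular it does not rely on $\hat f_l$ or $\hat f_u$ being genuine densities (only that they are measurable upper/lower envelopes of $f$ on $\Omega$); this is convenient because the earlier existence theorem used $\int f = 1$ in a nontrivial way. Second, the same ``feasible-set monotonicity'' argument recovers the symmetric and monotone versions of the statement, and indeed explains why the monotone-case proof was deemed straightforward enough to omit. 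There is no real obstacle here; the only thing to be slightly careful about is to state the sandwich cleanly as an almost-everywhere comparison under $\Omega$ before taking the suprema, so that the feasibility deductions are clearly justified.
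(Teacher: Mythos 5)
Your proof is correct and is exactly the feasible-set monotonicity argument the paper has in mind when it declares the proof ``straightforward and thus omitted'': on the event that the band holds, $\hat f_l \le f \le \hat f_u$ a.e.\ nests the three feasible sets, and the suprema are ordered accordingly. Your side remark that attainment of the suprema is not needed is also accurate and slightly sharpens the omitted argument.
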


The proof is straightforward and thus omitted.

\subsection{Numerical method}

Unlike the previous sections, here the computation of our estimator(s) is non-trivial: indeed, after computing $\hat f$ with an off-the-shelf procedure, we need to solve the optimization problem \eqref{logconcave_fitted_maximization}. Thus, in this section, we discuss how to solve this optimization problem.

Although least concave majorants (or equivalently greatest convex minorants) have been considered extensively in the literature, for example in \citep{francuu2017new, jongbloed1998iterative}, the problem \eqref{logconcave_original_maximization} calls for a type of greatest concave minorant, and we were not able to find references in the literature that directly tackle this problem. We do want to mention \citep{gorokhovik2018minimal}, where a similar concept is discussed, but the definition is different from ours and no numerical procedure to solve the problem is provided. For lack of structure to exploit, we propose a direct discretization followed by an application of sequential quadratic programming (SQP), first proposed by \cite{wilson1963simplicial}. For more details on SQP, we point the reader to \citep{gill2012sequential} or \citep[Ch 18]{nocedal2006sequential}.

Going back to \eqref{logconcave_original_maximization}, where here $f$ plays the role of a generic density on the real line, the main idea is to restrict $v := \log h$ to be a continuous, concave, piecewise linear function. Once discretized, the integral admits a simple closed-form expression and the concavity constraint is transformed into a set of linear inequality constraints. 

To setup the discretization, for $k \ge 1$ integer, let $t_{-k, k} < t_{-k+1, k} < \cdots < t_{k-1, k} < t_{k,k}$ be such that $t_{j,k} = - t_{-j,k}$ (symmetry) and $t_{j+1, k} - t_{j,k} = \delta_k$  (equispaced) for all $j$, with $\delta_k \to 0$ (dense) and $t_{k,k} \to \infty$ as $k\to\infty$ (spanning the real line). Suppose that $v$ is concave with $v \le \log f$ and to that function associate the triangular sequence $v_{j,k} := v(t_{j,k})$. Then, for each $k$,
\begin{align*}
-v_{j+1,k} + 2 v_{j,k} - v_{j-1,k}
&= - v(t_{j+1,k}) + 2 v(t_{j,k}) - v(t_{j-1,k}) \\
&= 2 \big(v(t_{j,k}) - \tfrac12 v(t_{j+1,k}+\delta_k) - \tfrac12 v(t_{j-1,k}+\delta_k)\big)
\ge 0, \quad \text{for all } k,
\end{align*}
by the fact that $v$ is concave.
In addition, $v_{j,k} = v(t_{j,k}) \le \log f(t_{j,k}) =: u_{j,k}$ (which are given).
Instead of working directly with a generic concave function $v$, we work with those that are piecewise linear as they are uniquely determined by their values at the grid points if we further restrict them to be $=-\infty$ on $(-\infty, t_{-k,k}) \cup (t_{k,k},+\infty)$. Effectively, at $k$, we replace in \eqref{logconcave_original_maximization} the class $\cC$ with the class $\cC_k$ of functions $h$ such that $v = \log h$ is log-concave, linear on each interval $[t_{j,k}, t_{j+1,k}]$, $v(x) = -\infty$ for $x < t_{-k,k}$ or $x > t_{k,k}$, and that satisfies $v(t_{j,k}) \le \log f(t_{j,k})$, for all $j$.

This leads us to the following optimization problem, which instead of being over a function space is over a Euclidean space:
\begin{equation}
    \label{actual_linear_approximation}
\begin{split}
\text{maximize} \quad & \Lambda(\bv) \\
\text{over} \quad & \mathbf{v} = [v_{-k,k}, \dots, v_{k,k}]^\top \quad \text{such that} \quad 
\mathbf{A} \mathbf{v} \geq \mathbf{b},
\end{split}
\end{equation}
where 
\begin{equation}
\label{not_apx_area}
\Lambda(v_{-k}, \dots, v_k) := \delta_k \sum_{j=-k}^{k-1} \lambda(v_j, v_{j+1}),
\end{equation}
\begin{equation}
\lambda(x, y) := \mathds{1} \{x \neq y\} \frac{\exp(x) - \exp(y)}{x-y} + \mathds{1} \{x=y\} \exp(x),
\end{equation}
and where
\begin{equation}
    \label{pointwise_approximation_a_b}
    \mathbf{A} = \begin{bmatrix} -1 & 2 & -1& 0 & \dots & 0 & 0 & 0\\ 0 & -1 & 2 & -1 & \dots &0 & 0 & 0\\ \vdots&\vdots&\vdots&\vdots&\vdots&\vdots& \vdots& \vdots \\ 0 & 0 & 0 & 0 &\dots & -1 &2 & -1 \\ -1 & 0 & 0 & 0 &\dots & 0 & 0 & 0 \\ 0 & -1 & 0 & 0 &\dots & 0 & 0 & 0 \\ \vdots&\vdots&\vdots&\vdots&\vdots& \vdots&\vdots& \vdots \\ 0 & 0 & 0 & 0 &\dots & 0 & 0 & -1
    \end{bmatrix}, 
    \qquad \mathbf{b} = \begin{bmatrix} 0 \\ 0 \\ \vdots \\ 0 \\ -u_1 \\ -u_2 \\ \vdots \\ -u_k  \end{bmatrix}\, .
\end{equation}
As we are not aware of any method that could solve \eqref{actual_linear_approximation} exactly, we will use sequential quadratic programming (SQP). In our implementation we use the R package {\sf nloptr}\,\footnote{~\url{https://cran.r-project.org/web/packages/nloptr/index.html}} based on an original implementation of \cite{kraft1988software}.  


\begin{thm}
Assume that $f$ is Lipschitz and locally bounded away from~0.
Then the value of discretized optimization problem \eqref{actual_linear_approximation} converges, as $k \to \infty$, to the value of the original optimization problem \eqref{logconcave_original_maximization}. 
\end{thm}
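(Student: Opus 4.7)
The plan is to prove the two inequalities $\liminf_{k\to\infty} q_k \ge q_*$ and $\limsup_{k\to\infty} q_k \le q_*$ separately, where $q_k$ denotes the value of \eqref{actual_linear_approximation} and $q_*$ that of \eqref{logconcave_original_maximization}. For the lower bound, I would take an optimizer $h_*$ of \eqref{logconcave_original_maximization} (which exists by the previous theorem) and, for each $k$, define $v_k$ to be the continuous piecewise linear interpolant of $\log h_*$ at those grid points $t_{j,k}$ that lie in the interior of the support of $h_*$, with $v_k=-\infty$ elsewhere. Since piecewise linear interpolation of a concave function is concave and lies pointwise below it, $v_k$ is concave and satisfies $v_k\le\log h_*\le\log f$; in particular $v_k(t_{j,k})\le\log f(t_{j,k})$, so $v_k\in\cC_k$. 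By continuity of $\log h_*$ on the interior of its support, $v_k\to\log h_*$ pointwise there, and since $e^{v_k}\le h_*\le f$ with $\int f=1$, dominated convergence gives $\int e^{v_k}\to\int h_*=q_*$, whence $\liminf_k q_k\ge q_*$.

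For the upper bound, let $v_k$ be an optimizer of \eqref{actual_linear_approximation} and fix $a>0$. The Lipschitz hypothesis on $f$ combined with local boundedness away from $0$ makes $\log f$ locally Lipschitz; call $L'_a$ its Lipschitz constant on $[-a-1,a+1]$. For $k$ large enough that $t_{k,k}>a+1$, and for $x\in[-a,a]\cap[t_{j,k},t_{j+1,k}]$, the linearity of $v_k$ on that subinterval together with the grid-point constraint yields
\begin{equation*}
v_k(x)\le\max\{v_k(t_{j,k}),v_k(t_{j+1,k})\}\le\max\{\log f(t_{j,k}),\log f(t_{j+1,k})\}\le\log f(x)+L'_a\delta_k,
\end{equation*}
so $e^{v_k}\le e^{L'_a\delta_k}f$ on $[-a,a]$. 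Then $e^{-L'_a\delta_k}e^{v_k}$, restricted to $[-a,a]$ and extended by $0$, is log-concave and pointwise bounded by $f$, hence feasible for \eqref{logconcave_original_maximization}, giving $\int_{-a}^{a}e^{v_k}\le e^{L'_a\delta_k}q_*$, which tends to $q_*$ as $k\to\infty$.

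It remains to bound the tail $\int_{a\le|x|\le t_{k,k}}e^{v_k}$. On each subinterval $[t_{j,k},t_{j+1,k}]$ the function $e^{v_k}$ is convex in $x$ (exponential of a linear function), hence bounded by $\max\{f(t_{j,k}),f(t_{j+1,k})\}$, and by $L$-Lipschitz continuity of $f$,
\begin{equation*}
\delta_k\max\{f(t_{j,k}),f(t_{j+1,k})\}\le\int_{t_{j,k}}^{t_{j+1,k}}f+L\delta_k^2.
\end{equation*}
Summing over the tail intervals yields $\int_{a\le|x|\le t_{k,k}}e^{v_k}\le\int_{a\le|x|\le t_{k,k}}f+2L\delta_k(t_{k,k}-a)$. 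Under the natural grid-refinement condition $\delta_k t_{k,k}\to 0$ (equivalently $k\delta_k^2\to 0$), the correction vanishes, after which $\int_{|x|\ge a}f\to 0$ as $a\to\infty$, giving $\limsup_k q_k\le q_*$. The main obstacle is precisely this tail control: on any compact interval the local Lipschitz regularity of $\log f$ makes the comparison effortless, but in the tails $\log f\to-\infty$, so the interpolation estimate on $\log f$ is unusable and one must instead compare $e^{v_k}$ with $f$ via the grid-point bound and a Riemann sum argument, which only collapses under the above refinement condition; an alternative would be to exploit the exponential tail decay that log-concavity imposes on $e^{v_k}$ given the uniform grid-point bound $e^{v_k}(t_{j,k})\le f(t_{j,k})$, but the Riemann sum route appears the most transparent.
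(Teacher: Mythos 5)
Your proposal follows essentially the same route as the paper's proof: the lower bound by piecewise-linearly interpolating the logarithm of an optimizer of \eqref{logconcave_original_maximization} on the grid and invoking dominated convergence, and the upper bound by shifting the discrete optimizer down by the local Lipschitz modulus $L'_a\delta_k$ of $\log f$ on $[-a,a]$ to make it feasible for the continuous problem, plus a tail estimate. The one substantive difference is the tail term, where the paper simply asserts that $\int_{\bbR\setminus[-a,a]}\tilde f_k\to\int_{\bbR\setminus[-a,a]}f$ while your Riemann-sum bound makes explicit that this step requires the grid-refinement condition $\delta_k t_{k,k}\to 0$, which does not follow from $\delta_k\to 0$ and $t_{k,k}\to\infty$ alone — so on this point your argument is the more careful of the two.
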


The assumptions made on $f$ are really for convenience --- to expedite the proof of the result while still including interesting situations --- and we do expect that the result holds more broadly. We also note that, in our workflow, the problem \eqref{actual_linear_approximation} is solved for $\hat f$, and that $\hat f$ satisfies these conditions (remember that the sample size is held fixed here) if it is a kernel density estimator based on a smooth kernel function supported on the entire real line like the Gaussian kernel. 

\begin{proof}
Let $h$ be a solution to \eqref{logconcave_original_maximization} so that $\int h = \pi_0$, where $\pi_0$ denotes the value of \eqref{logconcave_original_maximization}.
Define $v = \log h$ and let $v_{j,k} = v(t_{j,k})$. As we explained above, this makes $\bv_k := (v_{-k,k}, \dots, v_{k,k})$ feasible for \eqref{actual_linear_approximation}. Therefore $\Lambda(\bv_k) \le \pi_{0,k}$, where $\pi_{0,k}$ denotes the value of \eqref{actual_linear_approximation}. 
On the other hand, let $\tilde v_k$ denote the piecewise linear approximation to $v$ on the grid, meaning that $\tilde v_k(x) = -\infty$ if $x < t_{-k,k}$ or $x > t_{k,k}$, $\tilde v_k(t_{j,k}) = v(t_{j,k})$ and $\tilde v_k$ linear on $[t_{j,k}, t_{j+1,k}]$ for all $j$. We then have 
\begin{equation}
\Lambda(\bv_k)
= \int \tilde h_k
\xrightarrow{k \to \infty} \int h,
\end{equation}
where the convergence is justified, for example, by the fact that $\tilde h_k \to h$ pointwise and $0 \le \tilde h_k \le h$ (because, by concavity of $v$, $v \ge \tilde v_k$), so that the dominated convergence theorem applies.
From this we get that 
\begin{equation}
\liminf_{k\to\infty} \pi_{0,k} \ge \pi_0.
\end{equation}

In the other direction, let $\bv_k$ be a solution of \eqref{actual_linear_approximation}, so that $\Lambda(\bv_k) = \pi_{0,k}$. Let $\tilde v_k$ denote the linear interpolation of $\bv_k$ on the grid $(t_{-k,k}, \dots, t_{k,k})$, defined exactly as done above, and let $\tilde h_k = \exp(\tilde v_k)$. We have that $\tilde h_k \le f$ at the grid points, but not necessarily elsewhere. To work in that direction, fix $a > 0$, taken arbitrarily large in what follows. Because of our assumptions on $f$, we have that $u := \log f$ is Lipschitz on $[-a,a]$, say with Lipschitz constant $L$. 
In particular, with $\tilde u_k$ denoting the linear approximation of $u$ based on the grid, we have 
\begin{equation}
|u(x) - \tilde u_k(x)|
\le L \delta_k =: \eta_k.
\end{equation} 
Define $\bar v_k(x) = \tilde v_k(x) -\eta_k$ if $x \in [-a,a]$ and $\bar v_k(x) = -\infty$ otherwise. Note that $\bar v_k$ is also concave and piecewise linear, and because $\tilde h_k \le f$, $\tilde v_k \le \tilde u_k$, we have
\begin{equation}
\bar v_k(x) 
= \tilde v_k(x) -\eta_k 
\le \tilde u_k(x) -\eta_k 
\le u(x), \quad \forall x \in [-a,a].
\end{equation}
In particular, $\bar h_k := \exp(\bar v_k)$ is feasible for \eqref{logconcave_original_maximization}, implying that $\int \bar h_k \le \pi_0$.
On the other hand, we have
\begin{equation}
\int \bar h_k
= \int \exp(\bar v_k)
= \exp(-\eta_k) \int_{[-a,a]} \exp(\tilde v_k)
= \exp(-\eta_k) \int_{[-a,a]} \tilde h_k,
\end{equation}
and so, because $\tilde h_k \le \tilde f_k := \exp(\tilde u_k)$,
\begin{align*}
\pi_{0,k}
= \Lambda(\bv_k)
= \int \tilde h_k
&= \int_{[-a,a]} \tilde h_k + \int_{[-a,a]^\comp} \tilde h_k \\
&= \exp(\eta_k) \int \bar h_k + \int_{[-a,a]^\comp} \tilde h_k \\
&\le \exp(\eta_k) \pi_0 + \int_{[-a,a]^\comp} \tilde f_k.
\end{align*}
Given $\eps > 0$ arbitrarily small, choose $a$ large enough that $\int_{[-a,a]^\comp} f \le \eps$. Since 
\begin{equation}
\int_{[-a,a]^\comp} \tilde f_k \xrightarrow{k \to\infty} \int_{[-a,a]^\comp} f,
\end{equation}
for $k$ large enough we have $\int_{[-a,a]^\comp} \tilde f_k \le 2\eps$, implying that $\pi_{0,k} \le \exp(\eta_k) \pi_0 + 2\eps$. Using the fact that $\exp(\eta_k) \to 1$ since $\eta_k = L \delta_k \to 0$, and that $\eps > 0$ is arbitrary, we get that
\begin{equation}
\limsup_{k\to\infty} \pi_{0,k}
\le \pi_0,
\end{equation}
concluding the proof.
\end{proof}

We mention that besides the discretization \eqref{actual_linear_approximation}, we also considered a more straightforward discretization of the integral, effectively replacing $\Lambda$ in \eqref{actual_linear_approximation} with $\Lambda_0$ defined as
\begin{equation}
    \label{riemann_linear_approximation}
\Lambda_0(v_{-k}, \dots, v_k) := \delta_k \left(\tfrac{1}{2} \exp(v_{-k}) + \tfrac{1}{2}\exp(v_k) + \sum_{j=-k+1}^{k-1} \exp(v_j)\right).
\end{equation}
The outputs returned by these two discretizations, \eqref{actual_linear_approximation} and \eqref{riemann_linear_approximation}, were very similar in our numerical experiments.

\subsection{Numerical experiments}
\label{sec:logconcave_simulation_subsection}

In this subsection we consider experiments where the background component could be assumed to be log-concave. The density fitting process and confidence band acquisition are exactly the same as in \secref{symmetric_experiments}. We first consider four different situations as listed in \tabref{logcon_simulation_situations}. Although the mixture distributions are identical to those in \tabref{symmetric_simulation_situations} for the symmetric case, we need to point out that $\pi_0$ is different here, as $\pi_0$ here corresponds to the largest possible log-concave component as defined in \eqref{logconcave_pi0_definition}. We again generate a sample of size $n = 1000$ from each model, and repeat each setting $1000$ times. 

We note that the output of our algorithm depends heavily on the estimation of the density $\hat{f}$. When the bandwidth selected by cross-validation yields a density with a high frequency of oscillation, the largest log-concave component is very likely to be smaller than the correct value. For an illustrative situation, see \figref{illustration_logconcave_vascillation}. In the event that such a situation happen, we recommend that the user look at a plot of $\hat{f}$ before applying our procedure, with the possibility of selecting a larger bandwidth. This is what we did, for example, for the Carina and Leukemia datasets in \figref{logconcave_bandwidth_adjustment}. In the simulations, to avoid the effect of these issue, we report the median value instead of the mean. These values are reported in  \tabref{logcon_simulation_numbers}. 

As can be seen, even with only the assumption of log-concavity, our method is accurate in estimating $\pi_0$, with estimation errors ranging from $0.001$ to $0.007$. Our method frequently achieves smaller error in estimating $\pi_0$ than comparison methods in estimating $\theta_0$ and $\theta$, and does so with less information on the background component. For situations where the background component is specified incorrectly, we invite the reader to \secref{incorrect_background_subsection}.

\begin{figure}[htpb]
    \centering
    \includegraphics[scale=0.35]{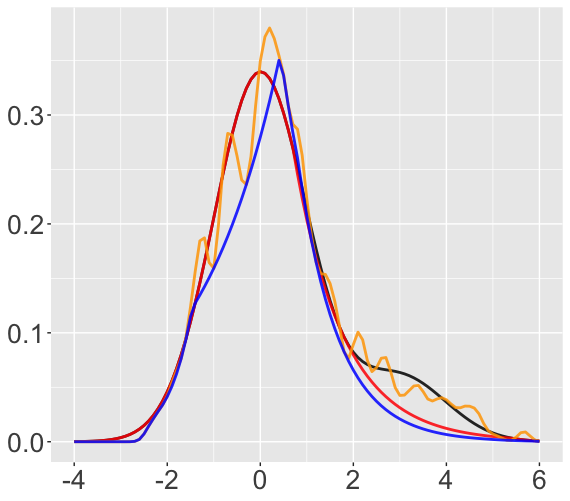}
    \caption{An illustrative situation that occurred in the course of our simulations, where a high frequency of oscillation of $\hat{f}$ results in a significantly lower estimation for $\pi_0$: a Gaussian mixture density $0.85 \text{ } \mathcal{N}(0, 1) + 0.15 \text{ } \mathcal{N} (0, 1)$, in black, and corresponding largest log-concave component
    $h$, in red, with the fitted density $\hat{f}$, in orange, and fitted largest log-concave component $\hat{h}$, in blue. Here, as before, $\hat{f}$ is obtained by kernel density estimation with bandwidth chosen by cross-validation. The result based on this estimate is $\hat{\pi}_0 = 0.807$, while the actual value is $\pi_0 = 0.931$.}
    \label{fig:illustration_logconcave_vascillation}
\end{figure}

\begin{rem}
We note that here the SQP algorithm as implemented in the R package \textsf{nloptr} sometimes gives $\hat{h} = 0$ as the largest log-concave component due to some parts of the estimated density $\hat{f}$ being $0$. This situation happens occasionally when computing the lower confidence bound, and it is of course incorrect in situations like those in \tabref{logcon_simulation_situations}. When this situation occurs, we rerun the algorithm on the largest interval of non-zero $\hat{f}$ values and report the greatest log-concave component acquired on that interval.
\end{rem}

\begin{table}[htpb]
\centering\small
\caption{Log-concave simulation situations of Gaussian mixtures, as well as values of $\theta$, $\theta_0$, and $\pi_0$, obtained through numerical optimization.}
\label{tab:logcon_simulation_situations}
\bigskip
\setlength{\tabcolsep}{0.03in}
\begin{tabular}{p{0.1\textwidth} 
p{0.55\textwidth} p{0.1\textwidth} p{0.1\textwidth} p{0.1\textwidth} 
}
\toprule
{\bf Model} & {\bf Distribution} & {\bf $\theta $} & {\bf $\theta_0 $} & {\bf $\pi_0$}\\ 
\midrule
\textsf{L1} & $0.85 \text{ }\mathcal{N} (0, 1) + 0.15 \text{ }\mathcal{N}(3, 1) $ & 0.850 & 0.850 & 0.931 \\ 
\midrule
\textsf{L2} & $0.95 \text{ }\mathcal{N} (0, 1) + 0.05 \text{ }\mathcal{N}(3, 1) $ & 0.950 & 0.950 & 0.981 \\ 
\midrule
\textsf{L3} & $0.85 \text{ }\mathcal{N} (0, 1) + 0.1 \text{ }\mathcal{N}(2.5, 0.75) + 0.05 \text{ }\mathcal{N}(-2.5, 0.75) $ & 0.850 & 0.851 &  0.975\\
\midrule
\textsf{L4} & $0.85 \text{ }\mathcal{N} (0, 1) + 0.1 \text{ }\mathcal{N}(2.5, 0.75) + 0.05 \text{ }\mathcal{N}(5, 0.75) $ & 0.850 & 0.850 & 0.946\\

\bottomrule
\end{tabular}
\end{table}

\begin{table}[htpb]
\centering\small
\caption{A comparison of various methods for estimating a log-concave background component in the situations of \tabref{logcon_simulation_situations}. Here we report the median values instead of the mean values (and also report the standard deviations, as before). As always, $\hat{\pi}_0$ should be compared with $\pi_0$, $\hat{\theta}_0^X$ should be compared with $\theta_0$, and $\hat{\theta}^X$ should be compared with $\theta$.}
\label{tab:logcon_simulation_numbers}
\bigskip
\setlength{\tabcolsep}{0.025in}
\begin{tabular}{p{0.1\textwidth} 
p{0.075\textwidth} p{0.075\textwidth} p{0.075\textwidth} 
p{0.1\textwidth}
p{0.075\textwidth} p{0.075\textwidth} p{0.075\textwidth} p{0.075\textwidth} 
p{0.075\textwidth} p{0.075\textwidth}
}
\toprule
{\bf Model} & {\bf $\hat{\pi}_0$} & {\bf $\hat{\pi}_0^{\rm{L}}$} & {\bf $\hat{\pi}_0^{\rm{U}}$} & {\bf Null} & {\bf $\hat{\theta}_0^{\rm{PSC}}$} & {\bf $\hat{\theta}_0^{\rm{PSH}}$} & {\bf $\hat{\theta}_0^{\rm{PSB}}$} & {\bf $\hat{\theta}^{\rm{E}}$} & {\bf $\hat{\theta}^{\rm{MR}}$} & {\bf $\hat{\theta}^{\rm{CJ}}$}
\\ 
\midrule
\textsf{L1}   & 0.932 & 0.596 & 1 & $\mathcal{N} (0,1)$ & 0.850 & 0.858 & 0.894 & 0.861 & 0.890 & 0.888 \\ 
   & (0.034) & (0.074) & (0) & & (0.024) & (0.023) & (0.018) & (0.023) & (0.014) & (0.085) \\ 
\midrule
\textsf{L2}   & 0.974 & 0.662 & 1 & $\mathcal{N}(0,1)$ & 0.942 & 0.958 & 0.988 & 0.953 & 0.972 & 0.964 \\ 
\text{   }   & (0.028) & (0.080) & (0) & & (0.023) & (0.022) & (0.015) & (0.026) & (0.008) & (0.082) \\
\midrule
\textsf{L3} & 0.969 & 0.611 & 1 & $\mathcal{N} (0,1)$ & 0.864 & 0.948 & 0.936 & 0.856 & 0.896 & 0.705
\\
 & (0.031) & (0.077) & (0) & & (0.023) & (0.034) & (0.017) & (0.024) & (0.015) & (0.085)
\\
\midrule
\textsf{L4} & 0.942 & 0.632 & 1 & $\mathcal{N}(0,1)$ & 0.848 & 0.857 & 0.893 & 0.849 & 0.890 & 0.712
\\ 
\text{   }   & (0.033) & (0.074) & (0) & & (0.023) & (0.024) & (0.018) & (0.024) & (0.014) & (0.088) \\

\bottomrule
\end{tabular}
\end{table}

\subsection{Real data analysis}

In this subsection we examine real datasets of two component mixtures where the background component could be assumed to be log-concave. We first consider the six real datasets presented in \secref{symmetric_read_data}, but this time look for a background log-concave component. When the background is known, the numerical results of the Prostate and Carina datasets can be found in \tabref{logconcave_realdata_known_background}, \figref{logconcave_prostate_picture} and \figref{logconcave_carina_picture}. When the background is unknown, the numerical results of the HIV, Leukemia, Parkinson and Police datasets can be found in \tabref{logconcave_realdata_unknown_background}, \figref{logconcave_hiv_picture},  \figref{logconcave_leukemia_picture}, \figref{logconcave_parkinson_picture} and \figref{logconcave_police_picture}.

In addition to the above six datasets, we also include here the Old Faithful Geyser dataset \citep{azzalini1990look}. This dataset consists of $272$ waiting times, in minutes, between eruptions for the Old Faithful Geyser in Yellowstone National Park, Wyoming, USA. We attempt to find the largest log-concave component of the waiting times. The results are summarized in the first row of \tabref{logconcave_realdata_unknown_background}, \figref{logconcave_geyser_with_band} and \figref{logconcave_geyser_without_band}. We want to note from this example that the curve $\hat{h}_u$ acquired from $\hat{f}_u$ leading to the computation of $\hat{\pi}_0^{\rm U}$, is not the upper confidence bound of $h_0$, as shown by \figref{logconcave_geyser_with_band}. 

\begin{rem}
We observe here through these real datasets that compared to symmetric background assumptions in \secref{symmetric_read_data}, the largest log-concave background component usually has a higher weight than the largest symmetric background component.
\end{rem}

\begin{table}[htpb]
\centering\small
\caption{Real datasets where the background log-concave component is known. (Note that we work with $z$ values here instead of $p$-values so our result in Prostate dataset is slightly different from that in \citep{patra}.)}
\label{tab:logconcave_realdata_known_background}
\bigskip
\setlength{\tabcolsep}{0.03in}
\begin{tabular}{
p{0.1\textwidth} 
p{0.07\textwidth} p{0.07\textwidth} p{0.05\textwidth} 
p{0.08\textwidth}
p{0.07\textwidth} p{0.07\textwidth} p{0.07\textwidth} p{0.07\textwidth}
p{0.07\textwidth} p{0.07\textwidth}
}
\toprule
{\bf Model} & {\bf $\hat{\pi}_0$} & {\bf $\hat{\pi}_0^{\rm{L}}$} & {\bf $\hat{\pi}_0^{\rm{U}}$} &  {\bf Null} & {\bf $\hat{\theta}_0^{\rm{PSC}}$} & {\bf $\hat{\theta}_0^{\rm{PSH}}$} & {\bf $\hat{\theta}_0^{\rm{PSB}}$} & {\bf $\hat{\theta}^{\rm{E}}$} & {\bf $\hat{\theta}^{\rm{MR}}$} & {\bf $\hat{\theta}^{\rm{CJ}}$} \\ 
\midrule
Prostate & 0.994 & 0.809 & 1 & $\mathcal{N} (0,1)$ & 0.931 & 0.941 & 0.975 & 0.931 & 0.956 & 0.867  \\
\midrule
Carina & 0.600 & 0.242 & 1 & \textsf{bgstars} & 0.636 & 0.645 & 0.677 & 0.951 & 0.664 & 0.206  \\
\bottomrule
\end{tabular}
\end{table}

\begin{table}[htpb]
\centering\small
\caption{Real datasets where the background log-concave distribution is unknown. (Note that Efron's method will not run on the Geyser dataset.)}
\label{tab:logconcave_realdata_unknown_background}
\bigskip
\setlength{\tabcolsep}{0.05in}
\begin{tabular}{
p{0.15\textwidth} 
p{0.1\textwidth} p{0.1\textwidth} p{0.1\textwidth} 
p{0.1\textwidth} p{0.1\textwidth} 
}
\toprule
{\bf Model} & {\bf $\hat{\pi}_0$} & {\bf $\hat{\pi}_0^{\rm{L}}$} & {\bf $\hat{\pi}_0^{\rm{U}}$} &  {\bf $\hat{\theta}^{\rm{E}}$ } & {\bf $\hat{\theta}^{\rm{CJ}}$ }\\ 
\midrule
Geyser & 0.693 & 0.287 & 1 & NA & 1  \\
\midrule
HIV & 0.984 & 0.804 & 1 & 0.940 & 0.926 \\
\midrule
Leukemia & 0.981 & 0.695 & 1 & 0.911 & 0.820 \\
\midrule
Parkinson & 1 & 0.605 & 1 & 0.998 & 0.993 \\
\midrule
Police & 0.997 & 0.765 & 1 & 0.985  & 0.978 \\

\bottomrule
\end{tabular}
\end{table}

\begin{figure}[htpb]

    \centering
    \centering
    \subfigure[Prostate dataset]{\label{fig:logconcave_prostate_picture}\includegraphics[scale=0.35]{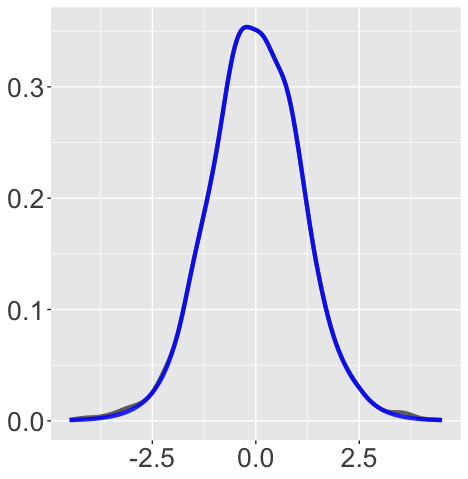}} \qquad
    \centering
    \subfigure[Carina dataset]{\label{fig:logconcave_carina_picture}\includegraphics[scale=0.35]{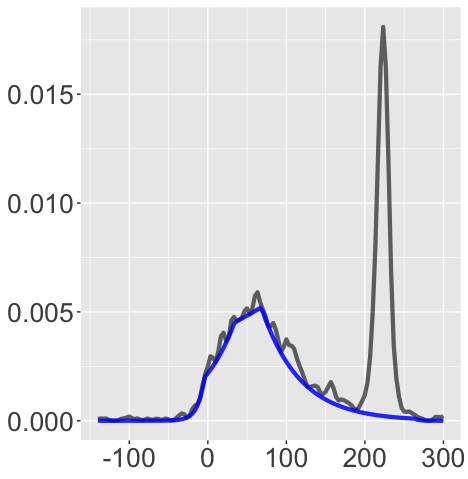}}

    \caption{Estimated background log-concave component on the Prostate ($z$ values) and Carina (radial velocity) datasets: the black curve represents fitted density; the blue curve represents computed $\hat{h}$, one of the largest log-concave components. Due to unusually high frequency of oscillation in $\hat{f}$ in the Carina dataset, we consider increasing the bandwidth from the one acquired by cross-validation, and the corresponding result is shown in \figref{logconcave_carina_adj_bandwidth}.}
\end{figure}

\begin{figure}[htpb]

    \centering

    \centering
    \subfigure[HIV dataset]{\label{fig:logconcave_hiv_picture}\includegraphics[scale=0.35]{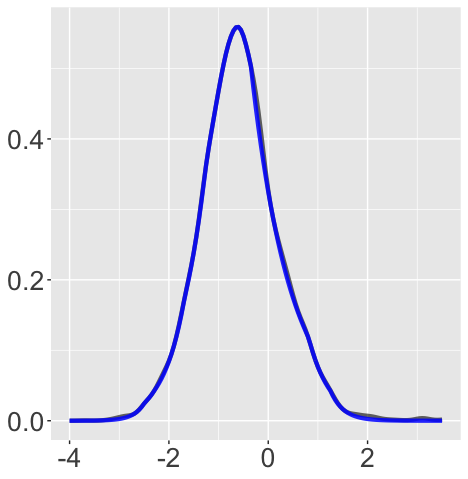}} \qquad
    \centering
    \subfigure[Leukemia dataset]{\label{fig:logconcave_leukemia_picture}\includegraphics[scale=0.35]{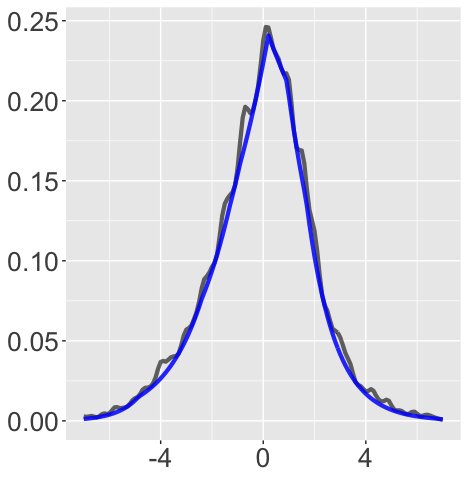}} \\
    \centering
    \subfigure[Parkinson dataset]{\label{fig:logconcave_parkinson_picture}\includegraphics[scale=0.35]{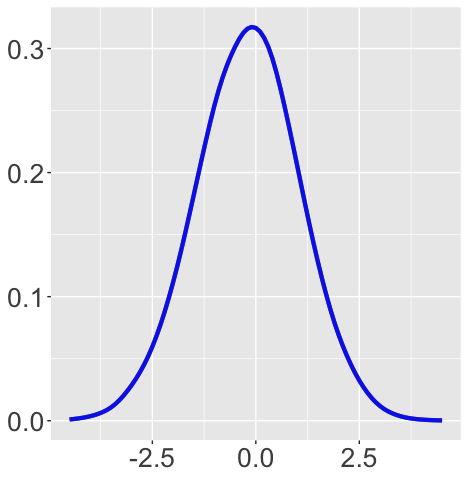}} \qquad
    \centering
    \subfigure[Police dataset]{\label{fig:logconcave_police_picture}\includegraphics[scale=0.35]{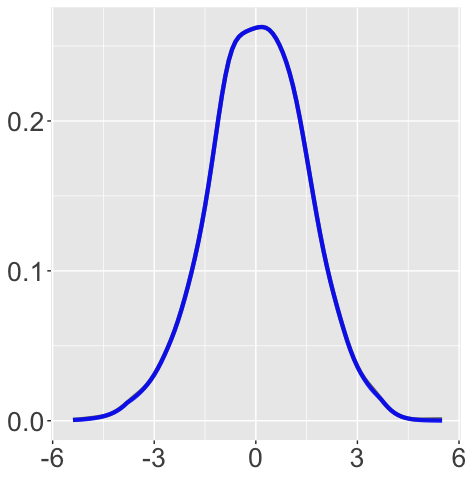}}
    
    \caption{Estimated background log-concave component on HIV ($z$ values), Leukemia ($z$ values), Parkinson ($z$ values), and Police ($z$ scores) datasets: the black curve represents fitted density; the blue curve represents computed $\hat{h}$, one of the largest log-concave components. Due to the high frequency of oscillation in $\hat{f}$ in the Leukemia dataset, we consider increasing the bandwidth from the one acquired by cross-validation, and the corresponding result is shown in \figref{logconcave_leukemia_adj_bandwidth}. }
    \label{fig:logconcave_other_real_data}
\end{figure}

\begin{figure}[htpb]

    \centering
    \centering
    \subfigure[Carina dataset]
    {
    \label{fig:logconcave_carina_adj_bandwidth}
    \includegraphics[scale=0.35]{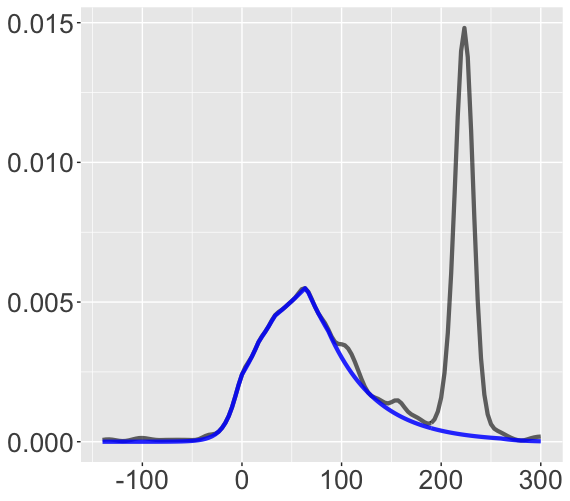}} \qquad
    \centering
    \subfigure[Leukemia dataset]{
    \label{fig:logconcave_leukemia_adj_bandwidth}
    \includegraphics[scale=0.35]{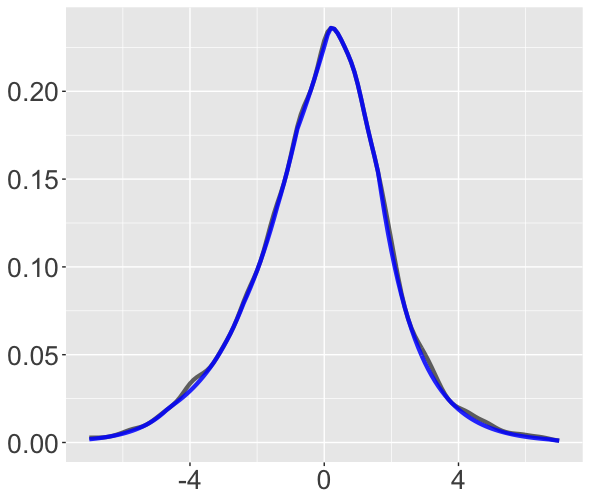}}

    \caption{Carina (radial velocity) and Leukemia ($z$ values) datasets with kernel density, with bandwidth chosen `by hand' instead of by cross-validation: the black curve represents fitted density with increased bandwidth; the blue curve represents the computed $\hat{h}$. We note that for the Carina dataset, the bandwidth was increased from $3.085$ to $6$, and $\hat{\pi}_0$ changed from $0.550$ to $0.600$. For the Leukemia dataset, the bandwidth was been increased from $0.124$ to $0.25$, and $\hat{\pi}_0$ changed from $ 0.939$ to $0.981$.}
    \label{fig:logconcave_bandwidth_adjustment}
\end{figure}

\begin{figure}[htpb]
    \centering
    
    \centering
    \subfigure[Geyser dataset with $\hat{h}_l$ and $\hat{h}_u$]{\label{fig:logconcave_geyser_with_band}\includegraphics[scale=0.35]{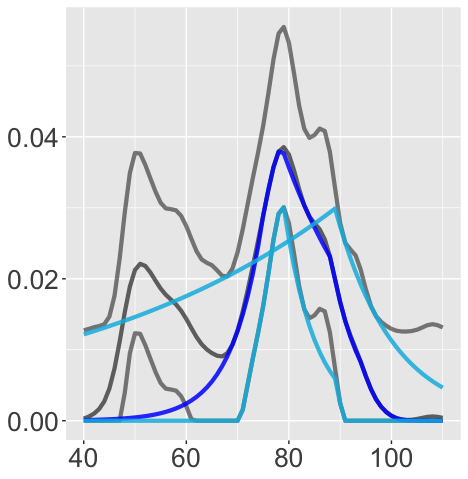}} \qquad
    \centering
    \subfigure[Geyser dataset with only $\hat{h}$ ]{\label{fig:logconcave_geyser_without_band}\includegraphics[scale=0.35]{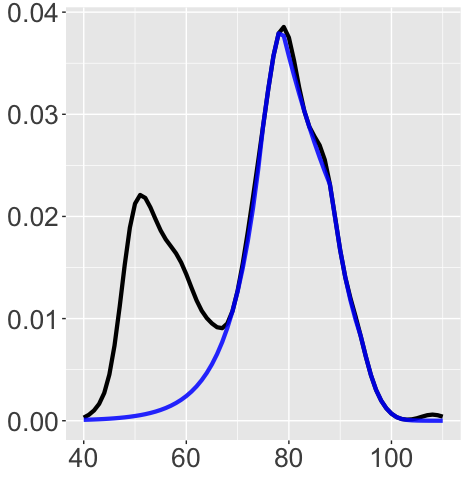}}
    \caption{Estimated background log-concave component on the Geyser (duration) dataset: the black curve represents the fitted density; the blue curve represents the computed $\hat{h}$; the gray curves (left) represent $\hat{f}_l$ and $\hat{f}_u$; the light blue curves (left) represents $\hat{h}_l$ and $\hat{h}_u$. Note from the left plot that $\hat{h}_u$ is only used to compute $\hat{\pi}_0^{\rm U}$, and is not an upper bound for $h$. }
\end{figure}

\section{Conclusion and discussion}
\label{sec:discussion}

In this paper, we extend the approach of \cite{patra} to settings where the background component of interest is assumed to belong to three emblematic examples: symmetric, monotone, and log-concave. In each setting, we derive estimators for both the proportion and density for the background component, establish their consistency, and provide confidence intervals/bands. However, the important situation of incorrect background distribution and other extensions remain unaddressed, and therefore we discuss them in this section.

\subsection{Incorrect background specification}
\label{sec:incorrect_background_subsection}

As mentioned in \secref{symmetric_experiments} and \secref{logconcave_simulation_subsection}, our method requires much less information than comparable methods, and therefore is much less prone to misspecification of the background component. In this subsection, we give an experiment illustrating this point. We consider the mixture model: 
$
0.85 \text{ }\mathcal{T}_6 + 0.15 \text{ }\mathcal{N}(3, 1).
$
Instead of the correct null distribution $\mathcal{T}_6$, we take it to be $\mathcal{N} (0,1)$. This situation could happen in situations of multiple testing settings, for example in the HIV dataset in \secref{symmetric_read_data}. We consider both a symmetric background and log-concave background on this model in \tabref{incorrect_background_situations}, and report the fitted values of our method and comparison methods in \tabref{incorrect_background_results_table}. As can be seen, when estimating $\pi_0$, our estimator achieves $0.002$ error when assuming symmetric background and $0.004$ error when assuming log-concave background, less then any other method in comparison that estimates $\theta_0$ or $\theta$. The heuristic estimator of \cite{patra} has slightly higher error than our method, while the constant estimator of \cite{patra} and the estimators of \cite{efron2007size} and \cite{cai2010optimal} have large errors. The upper confidence bound of \cite{meinshausen2006estimating} also becomes incorrect.

\begin{table}[htpb]
\centering\small
\caption{Simulation situations when background specification is incorrect, as well as values of $\theta$, $\theta_0$, $\pi_0$, obtained through numerical optimization.}
\label{tab:incorrect_background_situations}
\bigskip
\setlength{\tabcolsep}{0.05in}
\begin{tabular}{p{0.1\textwidth} p{0.2\textwidth}
p{0.3\textwidth} p{0.1\textwidth} p{0.1\textwidth} p{0.1\textwidth} 
}
\toprule
{\bf Model} & {\bf Background} & {\bf Distribution} & {\bf $\theta $} & {\bf $\theta_0$} & {\bf $\pi_0$} \\ 
\midrule
\textsf{S5} & Symmetric & $0.85 \text{ }\mathcal{T}_6 + 0.15 \text{ }\mathcal{N}(3, 1) $ & 0.850 & 0.850 & 0.859 \\ 
\midrule
\textsf{L5} & Log-concave & $0.85 \text{ }\mathcal{T}_6 + 0.15 \text{ }\mathcal{N}(3, 1) $ & 0.850 & 0.850 & 0.925 \\ 
\bottomrule
\end{tabular}
\end{table}

\begin{table}[htpb]
\centering\small
\caption{Results for the situations of \tabref{incorrect_background_situations}. We provide mean values in \textsf{S5}, and median values in \textsf{L5}.}
\label{tab:incorrect_background_results_table}

\bigskip
\setlength{\tabcolsep}{0.02in}
\begin{tabular}{p{0.08\textwidth} 
p{0.075\textwidth} 
p{0.075\textwidth} p{0.075\textwidth} p{0.075\textwidth} 
p{0.08\textwidth}
p{0.075\textwidth} p{0.075\textwidth} p{0.075\textwidth} p{0.075\textwidth} p{0.075\textwidth} p{0.075\textwidth}}
\toprule
{\bf Model} & {Center} & {\bf $\hat{\pi}_0$} & {\bf $\hat{\pi}_0^{\rm{L}}$} & {\bf $\hat{\pi}_0^{\rm{U}}$} & {\bf Null} & {\bf $\hat{\theta}_0^{\rm{PSC}}$} & {\bf $\hat{\theta}_0^{\rm{PSH}}$} & {\bf $\hat{\theta}_0^{\rm{PSB}}$} & {\bf $\hat{\theta}^{\rm{E}}$} & {\bf $\hat{\theta}^{\rm{MR}}$} & {\bf $\hat{\theta}^{\rm{CJ}}$}\\ 
\midrule
\textsf{S5} & 0.073 & 0.857 & 0.541 & 1 & $\mathcal{N} (0,1)$ & 0.815 & 0.855 & 0.877 & 0.803 & 0.843 & 0.816 \\ 
\text{   } & (0.057) & (0.021) & (0.057) & (0) & & (0.021) & (0.022) & (0.016) & (0.026) & (0.015) & (0.086)\\ 
\midrule
\textsf{L5} & & 0.921 & 0.574 & 1 & $\mathcal{N} (0,1)$ & 0.817 & 0.856 & 0.877 & 0.803 & 0.843 & 0.814  \\ 
\text{   } &  & (0.036) & (0.069) & (0) & & (0.021) & (0.022) & (0.016) & (0.026) & (0.015) & (0.086)\\ 
\bottomrule
\end{tabular}
\end{table}

\subsection{Combinations}

Although not discussed in the main part of the paper, some combinations of the shape constraints considered earlier are possible. For example, one could consider extracting a maximal background that is symmetric {\em and} log-concave; or one could consider extracting a maximal background that is monotone {\em and} log-concave. As it turns out, these two combinations are intimately related. Mixtures of symmetric log-concave distributions are considered, for example, in \citep{pu2020algorithm}.

\subsection{Generalization to higher dimensions}

All our examples were on the real line, corresponding to real-valued observations, simply because the work was in large part motivated by multiple testing in which the sample stands for the test statistics. But the approach is more general. Indeed, consider a measurable space, and let $\cD$ be a class of probability distributions on that space. Given a probability distribution $\mu$, we can quantify how much there is of $\cD$ in $\mu$ by defining
\begin{equation}
\pi_0 := \sup \big\{\pi: \exists \nu \in \cD \text{ s.t. } \mu \ge  \pi \nu\big\}.
\end{equation}

For concreteness, we give a simple example in an arbitrary dimension $d$ by generalizing the setting of a symmetric background component covered in \secref{symmetric}. 
Although various generalizations are possible, we consider the class --- also denoted $\cS$ as in \eqref{symmetric_pi0_definition} --- of spherically symmetric (i.e., radial) densities with respect to the Lebesgue measure on $\bbR^d$.
It is easy to see that the background component proportion is given by 
\begin{align}
\pi_0 = \int_{\bbR^d} h_0(x) d x,
&& h_0(x) := \min \{ f(y) : \|y\| = \|x\|\},
\end{align}
and, if $\pi_0 > 0$, the background component density is given by $g_0 := h_0/\pi_0$.

\subsection*{Acknowledgments}
We are grateful to Philip Gill for discussions regarding the discretization of the optimization problem \eqref{logconcave_original_maximization}.  

\bibliographystyle{chicago}
\bibliography{two_component.bib}

\end{document}